\providecommand{\U}[1]{\protect\rule{.1in}{.1in}}
\newtheorem{theorem}{Theorem}
\newtheorem{corollary}[theorem]{Corollary}
\newtheorem{remark}[theorem]{Remark}
\newenvironment{proof}[1][Proof]{\noindent\textbf{#1.} }{\ \rule{0.5em}{0.5em}}
\def\>{\rangle}
\def\<{\langle}
\begin{document}
\preprint{ }
\title[Approximate reversibility]{Approximate reversibility in the context of entropy gain, information gain,
and complete positivity}
\author{Francesco Buscemi}
\affiliation{Department of Computer Science and Mathematical Informatics, Nagoya
University, Chikusa-ku, Nagoya, 464-8601, Japan}
\email{buscemi@is.nagoya-u.ac.jp}
\author{Siddhartha Das}
\affiliation{Hearne Institute for Theoretical Physics, Department of Physics and Astronomy,
Louisiana State University, Baton Rouge, Louisiana 70803, USA}
\email{sdas21@lsu.edu}
\author{Mark M. Wilde}
\affiliation{Hearne Institute for Theoretical Physics, Department of Physics and Astronomy,
Louisiana State University, Baton Rouge, Louisiana 70803, USA}
\affiliation{Center for Computation and Technology, Louisiana State University, Baton
Rouge, Louisiana 70803, USA}
\email{mwilde@lsu.edu}
\keywords{approximate reversibility, recoverability, entropy gain, information gain,
entropic disturbance, divisible processes, data-processing inequality, quantum
relative entropy}
\pacs{}

\begin{abstract}
There are several inequalities in physics which limit how well we can process
physical systems to achieve some intended goal, including the second law of
thermodynamics, entropy bounds in quantum information theory, and the
uncertainty principle of quantum mechanics. Recent results provide physically
meaningful enhancements of these limiting statements, determining how well one
can attempt to reverse an irreversible process. In this paper, we apply and
extend these results to give strong enhancements to several entropy
inequalities, having to do with entropy gain, information gain, entropic
disturbance, and complete positivity of open quantum systems dynamics. Our
first result is a remainder term for the entropy gain of a quantum channel.
This result implies that a small increase in entropy under the action of a
subunital channel is a witness to the fact that the channel's adjoint can be
used as a recovery map to undo the action of the original channel. We
apply this result to pure-loss, quantum-limited amplifier, and
phase-insensitive quantum Gaussian channels, showing how a quantum-limited
amplifier can serve as a recovery from a pure-loss channel and vice versa. Our
second result regards the information gain of a quantum measurement, both
without and with quantum side information. We find here that a small
information gain implies that it is possible to undo the action of the
original measurement if it is efficient. The result also has operational
ramifications for the information-theoretic tasks known as measurement
compression without and with quantum side information. Our third result shows
that the loss of Holevo information caused by the action of a noisy channel on
an input ensemble of quantum states is small if and only if the noise can be
approximately corrected on average. We finally
establish that the reduced dynamics of a system-environment interaction are
approximately completely positive and trace-preserving if and only if the data
processing inequality holds approximately.

\end{abstract}
\volumeyear{ }
\volumenumber{ }
\issuenumber{ }
\eid{ }
\date{\today}
\startpage{1}
\endpage{10}
\maketitle

\section{Introduction}

The second law of thermodynamics constitutes a fundamental limitation on our
ability to extract energy from physical systems \cite{C85,BP02,Y01}. The data
processing inequality represents a limitation on our ability to process
information, being the basis for most of the important capacity theorems in
quantum information theory \cite{W15book}. The entropic uncertainty principle
of quantum mechanics places a limitation on how well we can measure
incompatible observables \cite{1925heisenberg,CBTW15}. These seemingly
disparate statements have a common mathematical foundation in an entropy
inequality known as the monotonicity of quantum relative entropy
\cite{Lin75,Uhl77}, which states that the quantum relative entropy cannot
increase under the action of a quantum channel. More precisely, the quantum
relative entropy between two density operators $\rho$ and $\sigma$ is defined
as \cite{U62}%
\begin{equation}
D(\rho\Vert\sigma)\equiv\operatorname{Tr}\{\rho\left[  \log\rho-\log
\sigma\right]  \},
\end{equation}
and the monotonicity of quantum relative entropy states that
\cite{Lin75,Uhl77}%
\begin{equation}
D(\rho\Vert\sigma)\geq D(\mathcal{N}(\rho)\Vert\mathcal{N}(\sigma)),
\label{eq:rel-ent-mono}%
\end{equation}
where $\mathcal{N}$ is a quantum channel.

Recently, researchers have explored refinements of these statements in various
contexts, with the common theme being to understand how well one can attempt
to reverse an irreversible process. One of the main technical developments
which has allowed for these refined statements is a strengthening of the
monotonicity of quantum relative entropy of the following form \cite{W15}:%
\begin{equation}
D(\rho\Vert\sigma)\geq D(\mathcal{N}(\rho)\Vert\mathcal{N}(\sigma))-\log
F(\rho,(\mathcal{R}\circ\mathcal{N})(\rho)), \label{eq:mono-enhance}%
\end{equation}
where $F(\omega,\tau)\equiv\left\Vert \sqrt{\omega}\sqrt{\tau}\right\Vert
_{1}^{2}$ is the quantum fidelity \cite{U73} between two density operators
$\omega$ and $\tau$, and $\mathcal{R}$ is a recovery channel with the property
that it perfectly recovers the $\sigma$ state, in the sense that
$\sigma=(\mathcal{R}\circ\mathcal{N})(\sigma)$ (see also
\cite{Sutter15,Junge15} for later developments and \cite{FR14}\ for an
important earlier development with conditional mutual information).

Several applications follow as a consequence. Ref.~\cite{WWW15} gave an
application in thermodynamics, proving that if the free energies of two states
are close and if it is possible to transition from one state to another via a
thermal operation such that there is an energy gain in the process, then one
can approximately reverse this thermodynamic transition without using any
energy at all. Ref.~\cite{BWW15} showed how to tighten the uncertainty
principle in the presence of quantum memory \cite{BCCRR10}\ with another term
related to how much disturbance a given measurement causes, thus unifying
several aspects of quantum physics, including measurement incompatibility,
entanglement, and measurement disturbance, in a single entropic uncertainty
relation. Finally, Ref.~\cite{W15} has given an increased understanding of
many well known entropy inequalities in quantum information, such as the joint
convexity of quantum relative entropy, the non-negativity of quantum discord,
the Holevo bound, and multipartite information inequalities.

In this paper, we continue with this theme and derive several new results:

\begin{enumerate}
\item First, we give a strong improvement of the well known statement that the
quantum entropy cannot decrease under the action of a unital quantum channel
(a channel which preserves the identity operator). The bound that we derive
has a rather simple proof, following from the operator concavity of the
logarithm (related to the method used in \cite{Sutter15}). The main physical
implication of this result is that if the entropy gain under the action of a
unital channel is not too large, then it is possible to reverse the action of
this channel by applying its adjoint (which is a quantum channel in this case).

\item Next, we consider the information gain of a quantum measurement, a
concept introduced in \cite{G71} and subsequently refined in
\cite{Win04,BHH08,WHBH12,BRW14}. The information gain of a quantum measurement
quantifies how much data we can gather by performing a quantum measurement on
a given state. It has an operational interpretation as the rate at which a
sender needs to transmit classical information to a receiver in order for them
to simulate a quantum measurement on a given state \cite{Win04}. Here, we
prove that if the information gain is not too large, then it is possible to
reverse the action of the measurement and, in the operational context, one can
also simulate the measurement well on average without sending any classical
data at all. The result also applies if the measurement is performed on one
share of a bipartite state.

\item Third, we provide a clear operational meaning for the notion of
\textit{entropic disturbance}, defined in~\cite{BH08} as the loss of the
Holevo information due to the action of a noisy channel on an initial ensemble
of quantum states. We accomplish this by showing that a small loss of Holevo
information implies that the action of the noisy channel on the input ensemble
can be approximately undone, on average. This result answers a question left
open from \cite{BH08}.

\item Finally, we give a refinement of the recent link between the data
processing inequality and complete positivity of open quantum systems dynamics
\cite{B14}. In \cite{B14}, it was shown that the data processing inequality
holds if and only if the reduced dynamics of an evolution can be described by
a completely positive trace-preserving map. Here, we show how this result
holds approximately, which should allow for experimental tests if desired.
That is, we show that the data processing inequality holds approximately if
and only if the reduced dynamics of an evolution are approximately completely
positive and trace-preserving (see Section~\ref{sec:CP-DP}\ for precise statements).
\end{enumerate}

The rest of the paper is devoted to giving more details and explanations of
these results. We begin in the next section by setting notation, definitions,
and reviewing the prior literature in more detail. We then follow with each of
the aforementioned results and conclude in Section~\ref{sec:conclusion}\ with
a summary.

\section{Preliminaries}

\label{sec:notation}This section reviews background material on quantum
information, all of which is available in \cite{W15book}. Let $\mathcal{L}%
(\mathcal{H})$ denote the algebra of bounded linear operators acting on a
Hilbert space $\mathcal{H}$. Let $\mathcal{L}_{+}(\mathcal{H})$ denote the
subset of positive semi-definite operators. We also write $X\geq0$ if
$X\in\mathcal{L}_{+}(\mathcal{H})$. An\ operator $\rho$ is in the set
$\mathcal{D}(\mathcal{H})$\ of density operators (or states) if $\rho
\in\mathcal{L}_{+}(\mathcal{H})$ and Tr$\left\{  \rho\right\}  =1$. The tensor
product of two Hilbert spaces $\mathcal{H}_{A}$ and $\mathcal{H}_{B}$ is
denoted by $\mathcal{H}_{A}\otimes\mathcal{H}_{B}$ or $\mathcal{H}_{AB}%
$.\ Given a multipartite density operator $\rho_{AB}\in\mathcal{D}%
(\mathcal{H}_{A}\otimes\mathcal{H}_{B})$, we unambiguously write $\rho_{A}%
=\ $Tr$_{B}\left\{  \rho_{AB}\right\}  $ for the reduced density operator on
system $A$. We use $\rho_{AB}$, $\sigma_{AB}$, $\tau_{AB}$, $\omega_{AB}$,
etc.~to denote general density operators in $\mathcal{D}(\mathcal{H}%
_{A}\otimes\mathcal{H}_{B})$, while $\psi_{AB}$, $\varphi_{AB}$, $\phi_{AB}$,
etc.~denote rank-one density operators (pure states) in $\mathcal{D}%
(\mathcal{H}_{A}\otimes\mathcal{H}_{B})$ (with it implicit, clear from the
context, and the above convention implying that $\psi_{A}$, $\varphi_{A}$,
$\phi_{A}$ may be mixed if $\psi_{AB}$, $\varphi_{AB}$, $\phi_{AB}$ are pure).
A purification $|\phi^{\rho}\rangle_{RA}\in\mathcal{H}_{R}\otimes
\mathcal{H}_{A}$ of a state $\rho_{A}\in\mathcal{D}(\mathcal{H}_{A})$ is such
that $\rho_{A}=\operatorname{Tr}_{R}\{|\phi^{\rho}\rangle\langle\phi^{\rho
}|_{RA}\}$. An isometry $U:\mathcal{H}\rightarrow\mathcal{H}^{\prime}$ is a
linear map such that $U^{\dag}U=I_{\mathcal{H}}$. Often, an identity operator
is implicit if we do not write it explicitly (and should be clear from the context).

Throughout this paper, we take the usual convention that $f(A)=\sum_{i:a_{i}
\neq0}f(a_{i})|i\rangle\langle i|$ when given a function $f$ and a Hermitian
operator $A$ with spectral decomposition $A=\sum_{i}a_{i}|i\rangle\langle i|$.
In particular, $A^{-1}$ is interpreted as a generalized inverse, so that
$A^{-1}=\sum_{i:a_{i}\neq0}a_{i}^{-1}|i\rangle\langle i|$, $\log\left(
A\right)  =\sum_{i:a_{i}>0}\log\left(  a_{i}\right)  |i\rangle\langle i|$,
$\exp\left(  A\right)  =\sum_{i:a_{i}\neq0}\exp\left(  a_{i}\right)
|i\rangle\langle i|$, etc. Throughout the paper, we interpret $\log$ as the
binary logarithm. We employ the shorthand supp$(A)$ and ker$(A)$ to refer to
the support and kernel of an operator $A$, respectively.

A linear map $\mathcal{N}_{A\rightarrow B}:\mathcal{L}(\mathcal{H}%
_{A})\rightarrow\mathcal{L}(\mathcal{H}_{B})$\ is positive if $\mathcal{N}%
_{A\rightarrow B}\left(  \sigma_{A}\right)  \in\mathcal{L}(\mathcal{H}%
_{B})_{+}$ whenever $\sigma_{A}\in\mathcal{L}(\mathcal{H}_{A})_{+}$. Let
id$_{A}$ denote the identity map acting on a system $A$. A linear map
$\mathcal{N}_{A\rightarrow B}$ is completely positive if the map
id$_{R}\otimes\mathcal{N}_{A\rightarrow B}$ is positive for a reference system
$R$ of arbitrary size. A linear map $\mathcal{N}_{A\rightarrow B}$ is
trace-preserving if $\operatorname{Tr}\left\{  \mathcal{N}_{A\rightarrow
B}\left(  \tau_{A}\right)  \right\}  =\operatorname{Tr}\left\{  \tau
_{A}\right\}  $ for all input operators $\tau_{A}\in\mathcal{L}(\mathcal{H}%
_{A})$. It is trace non-increasing if $\operatorname{Tr}\left\{
\mathcal{N}_{A\rightarrow B}\left(  \tau_{A}\right)  \right\}  \leq
\operatorname{Tr}\left\{  \tau_{A}\right\}  $ for all $\tau_{A}\in
\mathcal{L}_{+}(\mathcal{H}_{A})$. A quantum channel is a linear map which is
completely positive and trace-preserving (CPTP). A positive operator-valued
measure (POVM) is a set $\left\{  \Lambda^{m}\right\}  $ of positive
semi-definite operators such that $\sum_{m}\Lambda^{m}=I$. For $X,Y\in
\mathcal{L}(\mathcal{H})$, let $\left\langle X,Y\right\rangle \equiv
\operatorname{Tr}\{X^{\dag}Y\}$ denote the Hilbert--Schmidt inner
product.\ The adjoint $\left(  \mathcal{M}_{A\rightarrow B}\right)  ^{\dag}%
$\ of a linear map $\mathcal{M}_{A\rightarrow B}$\ is the unique linear map
satisfying%
\begin{equation}
\left\langle Y_{B},\mathcal{M}_{A\rightarrow B}(X_{A})\right\rangle
=\langle\left(  \mathcal{M}_{A\rightarrow B}\right)  ^{\dag}(Y_{B}%
),X_{A}\rangle,
\end{equation}
for all $X_{A}\in\mathcal{L}(\mathcal{H}_{A})$ and $Y_{B}\in\mathcal{L}%
(\mathcal{H}_{B})$. A linear map $\mathcal{M}_{A\rightarrow B}$ is unital if
it preserves the identity, i.e., $\mathcal{M}_{A\rightarrow B}(I_{A})=I_{B}$.
It then follows that a linear map is unital if and only if its adjoint is
trace preserving. A linear map $\mathcal{M}_{A\rightarrow B}$\ is subunital if
$\mathcal{M}_{A\rightarrow B}(I_{A})\leq I_{B}$, and this is equivalent to the
adjoint of $\mathcal{M}_{A\rightarrow B}$ being trace non-increasing. A
quantum channel $\mathcal{U}:\mathcal{L}(\mathcal{H}_{A})\rightarrow
\mathcal{L}(\mathcal{H}_{B})$ is an isometric channel if it has the action
$\mathcal{U}(X_{A})=UX_{A}U^{\dag}$, where $X_{A}\in\mathcal{L}(\mathcal{H}%
_{A})$ and $U:\mathcal{H}_{A}\rightarrow\mathcal{H}_{B}$ is an isometry.

A quantum instrument\ is a quantum channel that accepts a quantum system as
input and outputs two systems:\ a classical one and a quantum one. More
formally, a quantum instrument is a collection $\{\mathcal{N}^{x}\}$\ of
completely positive trace non-increasing maps, such that the sum map $\sum
_{x}\mathcal{N}^{x}$ is a quantum channel. We can write the action of a
quantum instrument on an input operator $P$ as the following quantum channel:%
\begin{equation}
\label{eq:average-instr}P\rightarrow\sum_{x}\mathcal{N}^{x}(P)\otimes
|x\rangle\langle x|,
\end{equation}
where $\left\{  |x\rangle\right\}  $ is an orthonormal basis labeling the
classical output of the instrument.

The trace distance between two quantum states $\rho,\sigma\in\mathcal{D}%
(\mathcal{H})$\ is equal to $\left\Vert \rho-\sigma\right\Vert _{1}$. It has a
direct operational interpretation in terms of the distinguishability of these
states. That is, if $\rho$ or $\sigma$ are prepared with equal probability and
the task is to distinguish them via some quantum measurement, then the optimal
success probability in doing so is equal to $\left(  1+\left\Vert \rho
-\sigma\right\Vert _{1}/2\right)  /2$. The fidelity is defined as
$F(\rho,\sigma)\equiv\left\Vert \sqrt{\rho}\sqrt{\sigma}\right\Vert _{1}^{2}$
\cite{U73}, and more generally we can use the same formula to define $F(P,Q)$
if $P,Q\in\mathcal{L}_{+}(\mathcal{H})$. Uhlmann's theorem states that
\cite{U73}%
\begin{equation}
F(\rho_{A},\sigma_{A})=\max_{U}\left\vert \langle\phi^{\sigma}|_{RA}%
U_{R}\otimes I_{A}|\phi^{\rho}\rangle_{RA}\right\vert ^{2},
\label{eq:uhlmann-thm}%
\end{equation}
where $|\phi^{\rho}\rangle_{RA}$ and $|\phi^{\sigma}\rangle_{RA}$ are
purifications of $\rho_{A}$ and $\sigma_{A}$, respectively, and the
optimization is with respect to all isometries $U_{R}$. The same statement
holds more generally for $P,Q\in\mathcal{L}_{+}(\mathcal{H})$. We will also
use the notation $\sqrt{F}(\rho,\sigma)\equiv\left\Vert \sqrt{\rho}%
\sqrt{\sigma}\right\Vert _{1}$ to denote the ``root fidelity'' when
convenient. The direct-sum property of the fidelity is that%
\begin{equation}
\label{eq:direct-sum-fid}\sqrt{F}(\omega_{XS},\tau_{XS})=\sum_{x}\sqrt
{p_{X}(x)q_{X}(x)}\sqrt{F}(\omega_{S}^{x},\tau_{S}^{x}),
\end{equation}
for classical--quantum states%
\begin{align}
\omega_{XS}  &  \equiv\sum_{x}p_{X}(x)|x\rangle\langle x|_{X}\otimes\omega
_{S}^{x},\\
\tau_{XS}  &  \equiv\sum_{x}q_{X}(x)|x\rangle\langle x|_{X}\otimes\tau_{S}%
^{x}.
\end{align}

The quantum relative entropy $D(P\Vert Q)$ between $P,Q\in\mathcal{L}%
_{+}(\mathcal{H})$, with $P\neq0$, is defined as \cite{U62}
\begin{equation}
D(P\Vert Q)=\operatorname{Tr}\{P\left[  \log P-\log Q\right]  \}
\end{equation}
if $\operatorname{supp}(P)\subseteq\operatorname{supp}(Q)$ and as $+\infty$
otherwise. The relative entropy $D(P\Vert Q)$\ is non-negative if Tr$\left\{
P\right\}  \geq$ Tr$\left\{  Q\right\}  $, a result known as Klein's
inequality \cite{LR68}. Thus, for density operators $\rho$ and $\sigma$, the
relative entropy is non-negative, and furthermore, it is equal to zero if and
only if $\rho=\sigma$. The quantum relative entropy obeys the following
property:%
\begin{equation}
D(P\Vert Q)\geq D(P\Vert Q^{\prime}), \label{eq:rel-ent-dom}%
\end{equation}
for $P,Q,Q^{\prime}\in\mathcal{L}_{+}(\mathcal{H})$ such that $Q\leq
Q^{\prime}$. The following relationship between fidelity and quantum relative
entropy is well known (see, e.g., \cite{MDSFT13}):%
\begin{equation}
D(P\Vert Q)\geq-\log F(P,Q). \label{eq:rel-ent-fid}%
\end{equation}

The quantum entropy $H(\rho)$\ of a density operator $\rho$ is $H(\rho
)=-\operatorname{Tr}\{\rho\log\rho\}$. We often write this as $H(A)_{\rho}$ if
$\rho_{A}$ is the density operator for system $A$. The conditional entropy of
a bipartite density operator $\rho_{AB}$ is equal to $H(A|B)_{\rho}\equiv
H(AB)_{\rho}-H(B)_{\rho}$. The mutual information is equal to $I(A;B)_{\rho
}=H(A)_{\rho}-H(A|B)_{\rho}$. The conditional mutual information of a
tripartite state $\rho_{ABC}$ is equal to $I(A;B|C)_{\rho}=H(B|C)_{\rho
}-H(B|AC)_{\rho}$. The following identities are well known (see, e.g.,
\cite{W15book}):%
\begin{align}
H(A)_{\rho}  &  =-D(\rho_{A}\Vert I_{A}),\\
H(A|B)_{\rho}  &  =-D(\rho_{AB}\Vert I_{A}\otimes\rho_{B}%
),\label{eq:rel-ent-cond-ent}\\
I(A;B)_{\rho}  &  =D(\rho_{AB}\Vert\rho_{A}\otimes\rho_{B}).
\label{eq:mut-info-rel-ent}%
\end{align}

The following \textquotedblleft recoverability theorem\textquotedblright\ is
an enhancement of the monotonicity of quantum relative entropy (mentioned in
\eqref{eq:mono-enhance}) and was proved recently in \cite{Junge15}, by an
extension of the methods from \cite{W15}:%
\begin{equation}
D(\rho\Vert\sigma)\geq D(\mathcal{N}(\rho)\Vert\mathcal{N}(\sigma))-\log
F(\rho,(\mathcal{R}\circ\mathcal{N})(\rho)), \label{eq:recovery-fid}%
\end{equation}
where $\rho\in\mathcal{D}(\mathcal{H})$, $\sigma\in\mathcal{L}_{+}%
(\mathcal{H})$, $\mathcal{N}:\mathcal{L}(\mathcal{H})\rightarrow
\mathcal{L}(\mathcal{H}^{\prime})$ is a quantum channel, and $\mathcal{R}$ is
a recovery quantum channel of the following form:%
\begin{multline}
\mathcal{R}(Q)\equiv\operatorname{Tr}\{(I-\Pi_{\mathcal{N}(\sigma)}%
)Q\}\tau\label{eq:recovery-map}\\
+\int_{-\infty}^{\infty}dt\ p(t)\ \mathcal{R}_{\sigma,\mathcal{N}}^{t/2}(Q),
\end{multline}
where $\Pi_{\mathcal{N}(\sigma)}$ is the projection onto the support of
$\mathcal{N}(\sigma)$,
\begin{equation}
\tau\in\mathcal{D}(\mathcal{H}),\ \ \ \ \ \ \ \ p(t)\equiv\frac{\pi}{2}\left[
\cosh(t)+1\right]  ^{-1} \label{eq:pt-dist}%
\end{equation}
is a probability distribution on $t\in\mathcal{R}$,
\begin{equation}
\mathcal{U}_{\omega,t}(X)\equiv\omega^{it}X\omega^{-it}%
\end{equation}
for $\omega$ positive semi-definite,
\begin{equation}
\mathcal{P}_{\sigma,\mathcal{N}}(Q)\equiv\sigma^{1/2}\mathcal{N}^{\dag
}\!\left(  \mathcal{N}(\sigma)^{-1/2}Q\mathcal{N}(\sigma)^{-1/2}\right)
\sigma^{1/2}%
\end{equation}
is a completely positive, trace non-increasing map known as the Petz recovery
map \cite{Pet86,Pet88}, and $\mathcal{R}_{\sigma,\mathcal{N}}^{t}$ is a
rotated or \textquotedblleft swiveled\textquotedblright\ Petz recovery map,
defined as%
\begin{equation}
\mathcal{R}_{\sigma,\mathcal{N}}^{t}\equiv\mathcal{U}_{\sigma,-t}%
\circ\mathcal{P}_{\sigma,\mathcal{N}}\circ\mathcal{U}_{\mathcal{N}(\sigma),t}.
\end{equation}
In fact, the following stronger statement holds \cite{Junge15}%
\begin{multline}
D(\rho\Vert\sigma)\geq D(\mathcal{N}(\rho)\Vert\mathcal{N}(\sigma
))\label{eq:stronger-recovery}\\
-\int_{-\infty}^{\infty}dt\ p(t)\log F(\rho,(\mathcal{R}_{\sigma,\mathcal{N}%
}^{t/2}\circ\mathcal{N})(\rho)),
\end{multline}
which will be useful for our purposes here. The inequality in
\eqref{eq:recovery-fid} implies the following one:%
\begin{equation}
I(A;B|C)_{\rho}\geq-\log F(\rho_{ABC},\mathcal{R}_{C\rightarrow AC}(\rho
_{BC})), \label{eq:CMI-recovery}%
\end{equation}
where $\mathcal{R}_{C\rightarrow AC}$ is defined from \eqref{eq:recovery-map},
by taking $\sigma=\rho_{AC}$ and $\mathcal{N}=\operatorname{Tr}_{A}$. This
follows from the definition we gave for $I(A;B|C)_{\rho}$, the equality in
\eqref{eq:rel-ent-cond-ent}, and the inequality in \eqref{eq:recovery-fid}.
Similarly, the following holds as well:%
\begin{equation}
I(A;B|C)_{\rho}\geq-\int_{-\infty}^{\infty}dt\ p(t)\log F(\rho_{ABC}%
,\mathcal{R}_{\rho_{AC},\operatorname{Tr}_{A}}^{t/2}(\rho_{BC})),
\label{eq:CMI-recovery-stronger}%
\end{equation}
by taking $\sigma=\rho_{AC}$ and $\mathcal{N}=\operatorname{Tr}_{A}$.
Explicitly, the action of the recovery map $\mathcal{R}_{\rho_{AC}%
,\operatorname{Tr}_{A}}^{t/2}$ on an operator $\omega_{C}$ is given as
follows:%
\begin{equation}
\mathcal{R}_{\rho_{AC},\operatorname{Tr}_{A}}^{t/2}(\omega_{C})=\rho
_{AC}^{\frac{1-it}{2}}\left[  I_{A}\otimes\rho_{C}^{-\frac{1-it}{2}}\omega
_{C}\rho_{C}^{-\frac{1+it}{2}}\right]  \rho_{AC}^{\frac{1+it}{2}}.
\end{equation}

\section{Entropy gain}

It is well known that the quantum entropy cannot decrease under the action of
a subunital, positive, and trace-preserving map \cite{AU78,AU82}:%
\begin{equation}
H(\mathcal{N}(\rho))\geq H(\rho), \label{eq:ent-inc-unital}%
\end{equation}
where $\rho\in\mathcal{D}(\mathcal{H})$ and $\mathcal{N}:\mathcal{L}%
(\mathcal{H})\rightarrow\mathcal{L}(\mathcal{H}^{\prime})$ is a subunital,
positive, and trace-preserving map. This entropy inequality follows as a
simple consequence of the monotonicity of quantum relative entropy (now shown
to hold for positive, trace-preserving maps \cite{MR15}). That is,
\eqref{eq:ent-inc-unital} follows by picking $\sigma=I$ in
\eqref{eq:rel-ent-mono} and applying \eqref{eq:rel-ent-dom} and that
$\mathcal{N}$ is subunital, whereby%
\begin{align}
-H(\rho)  &  =D(\rho\Vert I)\\
&  \geq D(\mathcal{N}(\rho)\Vert\mathcal{N}(I))\\
&  \geq D(\mathcal{N}(\rho)\Vert I)\\
&  =-H(\mathcal{N}(\rho)).
\end{align}
This entropy inequality has a number of applications in quantum information
and other contexts.

The following theorem leads to an enhancement of \eqref{eq:ent-inc-unital}:

\begin{theorem}
\label{thm:entropy-gain}Let $\rho\in\mathcal{D}(\mathcal{H})$ and let
$\mathcal{N}:\mathcal{L}(\mathcal{H})\rightarrow\mathcal{L}(\mathcal{H}%
^{\prime})$ be a positive and trace-preserving map. Then%
\begin{equation}
H(\mathcal{N}(\rho))-H(\rho)\geq D(\rho\Vert(\mathcal{N}^{\dag}\circ
\mathcal{N})(\rho)).
\end{equation}

\end{theorem}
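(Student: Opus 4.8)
The plan is to reduce the claimed entropy inequality to a single operator inequality expressing the operator concavity of the logarithm, and then to pair that operator inequality against the state $\rho$. First I would expand the right-hand side using the definition of relative entropy,
\begin{equation}
D(\rho\Vert(\mN^{\dag}\circ\mN)(\rho))=-H(\rho)-\Tr\{\rho\log((\mN^{\dag}\circ\mN)(\rho))\}.
\end{equation}
Substituting this into the desired inequality and cancelling the common $-H(\rho)$ term, the claim becomes equivalent to $\Tr\{\rho\log((\mN^{\dag}\circ\mN)(\rho))\}\geq\Tr\{\mN(\rho)\log\mN(\rho)\}$. Setting $\sigma\equiv\mN(\rho)$ and using the defining property of the adjoint, namely $\Tr\{\sigma\log\sigma\}=\Tr\{\mN(\rho)\log\sigma\}=\Tr\{\rho\,\mN^{\dag}(\log\sigma)\}$, this is in turn equivalent to
\begin{equation}
\Tr\{\rho\left[\log(\mN^{\dag}(\sigma))-\mN^{\dag}(\log\sigma)\right]\}\geq0 .
\end{equation}

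Next I would establish the operator inequality $\log(\mN^{\dag}(\sigma))\geq\mN^{\dag}(\log\sigma)$, from which the displayed trace inequality follows at once since $\rho\geq0$. The two structural facts needed are that $\mN^{\dag}$ is positive (the adjoint of a positive map is positive, because $\<\psi|\mN^{\dag}(B)|\psi\>=\Tr\{\mN(|\psi\>\<\psi|)B\}\geq0$ whenever $B\geq0$) and that $\mN^{\dag}$ is unital (since $\mN$ is trace preserving, by the equivalence recorded in Section~\ref{sec:notation}). As the logarithm is operator concave, the operator Jensen (Davis--Choi) inequality for a positive unital map $\Phi$ gives $\Phi(\log\sigma)\leq\log(\Phi(\sigma))$, and applying this with $\Phi=\mN^{\dag}$ is exactly what is required.

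The subtle point, and the step I expect to be the main obstacle, is that $\mN$ is assumed only positive and trace preserving rather than completely positive, so $\mN^{\dag}$ is only positive and unital; I would therefore take care to use the operator concavity argument in the form valid for merely positive unital maps. A clean route is the integral representation $\log x=\int_{0}^{\infty}[(1+t)^{-1}-(x+t)^{-1}]\,dt$, under which, after using unitality to write $\mN^{\dag}(\sigma)+tI=\mN^{\dag}(\sigma+tI)$, the desired operator inequality reduces to the pointwise-in-$t$ estimate $\mN^{\dag}((\sigma+tI)^{-1})\geq(\mN^{\dag}(\sigma+tI))^{-1}$, which is valid for every positive unital map by operator convexity of $x\mapsto x^{-1}$. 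I would also handle the support and domain conventions for the logarithm (and hence the finiteness of the relative entropy), most safely by first proving the inequality for $\sigma>0$ and then passing to the general case by a limiting argument.
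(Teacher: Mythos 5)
Your proposal is correct and follows essentially the same route as the paper: rewrite the entropy difference via the adjoint relation $\Tr\{\mathcal{N}(\rho)\log\mathcal{N}(\rho)\}=\Tr\{\rho\,\mathcal{N}^{\dag}(\log\mathcal{N}(\rho))\}$ and then apply the operator Jensen inequality $\mathcal{N}^{\dag}(\log\sigma)\leq\log(\mathcal{N}^{\dag}(\sigma))$ for the positive unital map $\mathcal{N}^{\dag}$. The only difference is cosmetic (you work backwards from the claim and spell out, via the integral representation of the logarithm, the proof of the operator Jensen inequality for merely positive unital maps, which the paper simply cites to Choi).
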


\begin{proof}
This follows because%
\begin{align}
&  H(\mathcal{N}(\rho))-H(\rho)\nonumber\\
&  =\operatorname{Tr}\{\rho\log\rho\}-\operatorname{Tr}\{\mathcal{N}(\rho
)\log\mathcal{N}(\rho)\}\\
&  =\operatorname{Tr}\{\rho\log\rho\}-\operatorname{Tr}\{\rho\mathcal{N}%
^{\dag}(\log\mathcal{N}(\rho))\}\\
&  \geq\operatorname{Tr}\{\rho\log\rho\}-\operatorname{Tr}\{\rho
\log(\mathcal{N}^{\dag}\circ\mathcal{N})(\rho)\}\\
&  =D(\rho\Vert(\mathcal{N}^{\dag}\circ\mathcal{N})(\rho)).
\end{align}
The second equality is from the definition of the adjoint. The inequality
follows from operator concavity of the logarithm and the operator Jensen
inequality for positive unital maps \cite{choi1974}\ (see also
\cite[Lemma~3.10]{Sutter15}).
\end{proof}

If $\mathcal{N}$ is additionally subunital~\footnote{Notice that a
trace-preserving positive map can be subunital only if the output space
dimension is not smaller than the input space dimension, i.e., $\dim
\mathcal{H}^{\prime}\ge\dim\mathcal{H}$. In particular, if $\dim
\mathcal{H}^{\prime}=\dim\mathcal{H}$, then subunitality is equivalent to
unitality.}, then Theorem~\ref{thm:entropy-gain} implies that $\mathcal{N}%
^{\dag}$ is trace non-increasing, which in turn implies that $D(\rho
\Vert(\mathcal{N}^{\dag}\circ\mathcal{N})(\rho))\geq0$ by Klein's inequality.
Thus, in this case, we obtain a significant strengthening of the well known
fact that the entropy increases under the action of a subunital, positive,
trace-preserving map.

The resulting entropy inequality also leads to an interpretation in terms of
recoverability, in the sense discussed in \cite{W15}. That is, we can take the
recovery map to be%
\begin{equation}
\mathcal{R}(Y)\equiv\mathcal{N}^{\dag}(Y)+\operatorname{Tr}%
\{(\operatorname{id}-\mathcal{N}^{\dag})(Y)\}\tau,
\end{equation}
where $\tau$ is any state in $\mathcal{D}(\mathcal{H})$, and we get that, for
all $\rho$,%
\begin{equation}
H(\mathcal{N}(\rho))-H(\rho)\geq D(\rho\Vert(\mathcal{R}\circ\mathcal{N}%
)(\rho)) \label{eq:subunital-rel-ent-bound}%
\end{equation}
by applying \eqref{eq:rel-ent-dom}, because $(\mathcal{R}\circ\mathcal{N}%
)(\rho)\geq(\mathcal{N}^{\dag}\circ\mathcal{N})(\rho)$. Note that
$\mathcal{R}$ is a positive map if $\mathcal{N}$ is. We also note that if
$\mathcal{N}$ is subunital the quantity $D\!\left(  \rho\Vert(\mathcal{R}%
\circ\mathcal{N})(\rho)\right)  $ can be viewed as a measure of how much
$\mathcal{N}$ deviates from being an isometry, being equal to zero if
$\mathcal{N}$ is an isometric channel and non-zero otherwise (here we could
also maximize the quantity with respect to input states $\rho$ and output
states~$\tau$).

Thus, what we find is an improvement over what we would get by applying
\eqref{eq:recovery-fid} or the main result of \cite{Sutter15}. First, there is
a mathematical advantage in the sense that $\mathcal{N}$ is not required to be
a channel, but it suffices for it to be a positive map. This addresses an open
question from \cite{MR15} for a very special case. Some might also consider
this to be a physical advantage as well, given that in some situations the
description of quantum dynamical evolutions is not given by a completely
positive map (see, e.g., \cite{B14} and references therein). Second, the
remainder term in \eqref{eq:subunital-rel-ent-bound} features the quantum
relative entropy and thus is stronger than the $-\log F$ bound in
\eqref{eq:recovery-fid} (cf. \eqref{eq:rel-ent-fid}) and the \textquotedblleft
measured relative entropy\textquotedblright\ term from \cite{Sutter15}.
Finally, note that Theorem~\ref{thm:entropy-gain} represents an improvement of
some of the results from \cite{Zhang20114163,C14}.

\subsection{Application to bosonic channels}

Theorem~\ref{thm:entropy-gain}\ finds application for practical bosonic
channels that have a long history in quantum information theory, in
particular, the pure-loss and quantum-limited amplifier channels, and even all
phase insensitive Gaussian channels \cite{WPGCRSL12}. A pure-loss channel is
defined from the following input-output Heisenberg-picture relation:%
\begin{equation}
\hat{b}=\sqrt{\eta}\hat{a}+\sqrt{1-\eta}\hat{e},
\end{equation}
where $\hat{a}$, $\hat{b}$, and $\hat{e}$ are the field-mode annihilation
operators representing the sender's input, the receiver's output, and the
environmental input of the channel. The parameter $\eta\in\left[  0,1\right]
$ represents the average fraction of photons that make it from the sender to
receiver. For the pure-loss channel, the environment is prepared in the vacuum
state. Let $\mathcal{B}_{\eta}$ denote the CPTP\ map corresponding to this
channel. A quantum-limited amplifier channel is defined from the following
input-output Heisenberg-picture relation:%
\begin{equation}
\hat{b}=\sqrt{G}\hat{a}+\sqrt{G-1}\hat{e}^{\dag},
\end{equation}
where $\hat{a}$, $\hat{b}$, and $\hat{e}$ have the same physical meaning as
given for the pure-loss channel. The parameter $G\in\lbrack1,\infty)$
represents the gain or amplification factor of the channel. For the
quantum-limited amplifier channel, the environment is prepared in the vacuum
state. Let $\mathcal{A}_{G}$ denote the CPTP\ map corresponding to this channel.

One of the critical insights of \cite{ISS11} is that these channels are
\textquotedblleft almost unital,\textquotedblright\ in the sense that%
\begin{equation}
\mathcal{B}_{\eta}(I)=\eta^{-1}I,\ \ \ \ \ \ \ \mathcal{A}_{G}(I)=G^{-1}I,
\end{equation}
and that their adjoints are strongly related, in the sense that%
\begin{align}
\mathcal{B}_{\eta}^{\dag}  &  =\eta^{-1}\mathcal{A}_{1/\eta},\\
\mathcal{A}_{G}^{\dag}  &  =G^{-1}\mathcal{B}_{1/G}.
\end{align}
Observe that the pure-loss channel is superunital and the amplifier channel is
subunital. These facts allow us to apply Theorem~\ref{thm:entropy-gain}\ and
the fact that $D(\rho\Vert c\sigma)=D(\rho\Vert\sigma)-\log c$ for $c>0$ to
find that%
\begin{align}
H(\mathcal{B}_{\eta}(\rho))-H(\rho)  &  \geq D(\rho\Vert(\mathcal{A}_{\frac
{1}{\eta}}\circ\mathcal{B}_{\eta})(\rho))+\log\eta,\\
H(\mathcal{A}_{G}(\rho))-H(\rho)  &  \geq D(\rho\Vert(\mathcal{B}_{\frac{1}%
{G}}\circ\mathcal{A}_{G})(\rho))+\log G.
\end{align}
These bounds demonstrate that a quantum-limited amplifier suffices as a
reversal channel for a pure-loss channel and vice versa. Note that the above
reversal is only good for weak losses and weak amplifiers (i.e., if
$\eta\approx1$ or $G\approx1$). We can also conclude that%
\begin{multline}
H((\mathcal{A}_{G}\circ\mathcal{B}_{\eta})(\rho))-H(\rho)\geq\\
D(\rho\Vert(\mathcal{A}_{1/\eta}\circ\mathcal{B}_{1/G}\circ\mathcal{A}%
_{G}\circ\mathcal{B}_{\eta})(\rho))+\log\left[  \eta G\right]  ,
\end{multline}
because%
\begin{equation}
(\mathcal{A}_{G}\circ\mathcal{B}_{\eta})^{\dag}=\left[  \eta G\right]
^{-1}\mathcal{A}_{1/\eta}\circ\mathcal{B}_{1/G}.
\end{equation}
The above bound applies to any phase insensitive quantum Gaussian channel,
given that any such channel can be written as a serial concatenation of a
pure-loss channel and a quantum-limited amplifier channel \cite{CGH06,PBLSC12}.

\subsection{Optimized entropy gain}

In \cite{A04}, the minimal entropy gain of a quantum channel was defined as%
\begin{equation}
G(\mathcal{N})\equiv\inf_{\rho}\left[  H(\mathcal{N}(\rho))-H(\rho)\right]  ,
\end{equation}
and the following bounds were established for a channel with the same input
and output Hilbert space $\mathcal{H}$:%
\begin{equation}
-\log\dim(\mathcal{H})\leq G(\mathcal{N})\leq0.
\end{equation}
(See also \cite{Holevo2010,Holevo2011,H11ISIT} for related work.) Applying
Theorem~\ref{thm:entropy-gain}\ gives the following alternate lower bound for
the entropy gain of a quantum channel:%
\begin{equation}
G(\mathcal{N})\geq\inf_{\rho}D(\rho\Vert(\mathcal{N}^{\dag}\circ
\mathcal{N})(\rho)).
\end{equation}

\subsection{Entropy gain in the presence of quantum side information}

A generalization of the entropy inequality in
\eqref{eq:subunital-rel-ent-bound}\ holds for the case of the conditional
quantum entropy, found by applying the same method:

\begin{corollary}
Let $\rho_{AB}\in\mathcal{D}(\mathcal{H}_{A}\otimes\mathcal{H}_{B})$ and
$\mathcal{N}_{A\to A^{\prime}}\otimes\operatorname{id}_{B}:\mathcal{L}%
(\mathcal{H}_{AB})\rightarrow\mathcal{L}(\mathcal{H}_{A^{\prime}B})$ be a
positive and trace-preserving map. Then%
\begin{multline}
H(A^{\prime}|B)_{\sigma}-H(A|B)_{\rho}\label{eq:cond-ent-unital}\\
\geq D(\rho_{AB}\Vert(\left(  \mathcal{N}_{A\rightarrow A^{\prime}}\right)
^{\dag}\circ\mathcal{N}_{A\rightarrow A^{\prime}})(\rho_{AB})),
\end{multline}
where $\sigma_{A^{\prime}B}\equiv(\mathcal{N}_{A\rightarrow A^{\prime}}%
\otimes\operatorname{id}_{B})(\rho_{AB})$.
\end{corollary}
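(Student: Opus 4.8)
The plan is to reduce the statement to Theorem~\ref{thm:entropy-gain} applied to the composite map $\mathcal{M}_{AB\to A'B}\equiv\mathcal{N}_{A\to A'}\otimes\operatorname{id}_{B}$, which by hypothesis is positive and trace-preserving on $\mathcal{L}(\mathcal{H}_{AB})$. The crucial preliminary observation is that the $B$-marginal is left unchanged: since $\mathcal{N}_{A\to A'}$ is trace-preserving and acts only on $A$, we have $\sigma_{B}=\operatorname{Tr}_{A'}\{\mathcal{M}(\rho_{AB})\}=\operatorname{Tr}_{A}\{\rho_{AB}\}=\rho_{B}$, and hence $H(B)_{\sigma}=H(B)_{\rho}$.

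First I would use the chain-rule decompositions $H(A'|B)_{\sigma}=H(A'B)_{\sigma}-H(B)_{\sigma}$ and $H(A|B)_{\rho}=H(AB)_{\rho}-H(B)_{\rho}$. Because the $B$-entropies coincide, the conditional-entropy difference collapses to the unconditional joint-entropy difference $H(A'B)_{\sigma}-H(AB)_{\rho}$, that is,
\begin{equation}
H(A'|B)_{\sigma}-H(A|B)_{\rho}=H(\mathcal{M}(\rho_{AB}))-H(\rho_{AB}).
\end{equation}
Next I would invoke Theorem~\ref{thm:entropy-gain} verbatim, with $\mathcal{M}$ in place of $\mathcal{N}$ and $\rho_{AB}$ in place of $\rho$, to obtain $H(\mathcal{M}(\rho_{AB}))-H(\rho_{AB})\geq D(\rho_{AB}\Vert(\mathcal{M}^{\dag}\circ\mathcal{M})(\rho_{AB}))$. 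Finally I would identify the recovery term: since the adjoint of a tensor product is the tensor product of the adjoints and $\operatorname{id}_{B}$ is self-adjoint, we have $\mathcal{M}^{\dag}=\left(\mathcal{N}_{A\to A'}\right)^{\dag}\otimes\operatorname{id}_{B}$, so that $(\mathcal{M}^{\dag}\circ\mathcal{M})(\rho_{AB})$ equals the argument appearing on the right-hand side of \eqref{eq:cond-ent-unital} (with the factor $\operatorname{id}_{B}$ left implicit, following the paper's convention).

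The step most deserving of care — and the one that explains why the hypothesis is imposed on $\mathcal{N}_{A\to A'}\otimes\operatorname{id}_{B}$ rather than on $\mathcal{N}_{A\to A'}$ alone — is the operator Jensen inequality used inside Theorem~\ref{thm:entropy-gain}. That inequality requires $\mathcal{M}^{\dag}$ to be a positive unital map: unitality of $\mathcal{M}^{\dag}$ follows from trace preservation of $\mathcal{M}$, and positivity of $\mathcal{M}^{\dag}$ follows from positivity of $\mathcal{M}$, both of which are guaranteed precisely by the stated hypothesis. Since a positive map need not remain positive after tensoring with the identity, positivity of $\mathcal{M}=\mathcal{N}_{A\to A'}\otimes\operatorname{id}_{B}$ is genuinely an assumption and cannot be inferred from positivity of $\mathcal{N}_{A\to A'}$ alone. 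This is the only real subtlety; once it is granted, the result is the direct reduction described above, which is why the corollary is obtained ``by applying the same method.''
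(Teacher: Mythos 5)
Your proposal is correct and follows essentially the same route as the paper's proof: apply Theorem~\ref{thm:entropy-gain} to the composite map $\mathcal{N}_{A\rightarrow A^{\prime}}\otimes\operatorname{id}_{B}$ and then use $H(B)_{\rho}=H(B)_{\sigma}$ to convert the joint-entropy difference into the conditional-entropy difference. Your added observations---that $\mathcal{M}^{\dag}=(\mathcal{N}_{A\rightarrow A^{\prime}})^{\dag}\otimes\operatorname{id}_{B}$ and that positivity must be assumed of the tensor-product map rather than of $\mathcal{N}_{A\rightarrow A^{\prime}}$ alone---are accurate and match the remark the paper makes immediately after the corollary.
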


\begin{proof}
This follows by applying Theorem~\ref{thm:entropy-gain} and definitions. From
Theorem~\ref{thm:entropy-gain}, we can conclude that%
\begin{multline}
H(A^{\prime}B)_{\sigma}-H(AB)_{\rho}\\
\geq D(\rho_{AB}\Vert(\left(  \mathcal{N}_{A\rightarrow A^{\prime}}\right)
^{\dag}\circ\mathcal{N}_{A\rightarrow A^{\prime}})(\rho_{AB})).
\end{multline}
Consider also that%
\begin{align}
&  H(A^{\prime}B)_{\sigma}-H(AB)_{\rho}\nonumber\\
&  =H(A^{\prime}B)_{\sigma}-H(B)_{\sigma}-\left[  H(AB)_{\rho}-H(B)_{\rho
}\right] \\
&  =H(A^{\prime}|B)_{\sigma}-H(A|B)_{\rho},
\end{align}
where we have used that $H(B)_{\rho}=H(B)_{\sigma}$. Combining these gives \eqref{eq:cond-ent-unital}.
\end{proof}

\begin{remark}
In the above corollary, note that we need not
necessarily take the map $\mathcal{N}_{A\rightarrow A^{\prime}}$ to be completely positive---we merely require that
$\mathcal{N}_{A\to A^{\prime}}\otimes\operatorname{id}_{B}$ be a positive map. For example, if system $B$ is a qubit, then we only require $\mathcal{N}_{A\rightarrow A^{\prime}}$ to be two-positive in order for the corollary to apply.
\end{remark}

\section{Information gain}

Groenewold originally defined the information gain of a quantum instrument
$\{\mathcal{N}^{x}\}$, when performed on a quantum state $\rho_{A}$, as
follows \cite{G71}:%
\begin{equation}
I_{G}(\{\mathcal{N}^{x} \},\rho_{A})\equiv H(\rho_{A})-\sum_{x}p_{X}%
(x)H(\rho_{A^{\prime}}^{x}), \label{eq:entropy-reduction}%
\end{equation}
where%
\begin{equation}
\rho_{A^{\prime}}^{x}\equiv\frac{\mathcal{N}_{A\rightarrow A^{\prime}}%
^{x}(\rho_{A})}{p_{X}(x)},\ \ \ \ \ \ \ p_{X}(x)\equiv\operatorname{Tr}%
\{\mathcal{N}_{A\rightarrow A^{\prime}}^{x}(\rho_{A})\}.
\label{eq:post-measurement-states}%
\end{equation}
This definition was based on the physical intuition that information gain
should be identified with the entropy reduction of the measurement. However,
it was later realized that the entropy reduction can be negative, and that
this happens if and only if the instrument is not an efficient measurement (an
efficient measurement is such that each $\mathcal{N}^{x}$ consists of a single
Kraus operator~\cite{Ozawa86,FJ01}).

Apparently without realizing the connection to Groenewold's information gain
of a measurement, Winter considered the operational, information-theoretic
task \cite{Win04}\ of determining the rate at which classical information
would need to be communicated from a sender to a receiver in order to simulate
the action of the measurement on a given state (if shared randomness is
allowed for free between sender and receiver). He called this task
\textquotedblleft measurement compression,\textquotedblright\ given that the
goal is to send the classical output of the measurement at the smallest rate
possible, in such a way that a third party would not be able to distinguish
the output of the protocol performed on many copies of $\phi_{RA}^{\rho}%
\,$from the same number of copies of the following state:%
\begin{equation}
\sigma_{RX}\equiv\sum_{x}\operatorname{Tr}_{A^{\prime}}\{(\operatorname{id}%
_{R}\otimes\mathcal{N}_{A\rightarrow A^{\prime}}^{x})(\phi_{RA}^{\rho
})\}\otimes|x\rangle\langle x|_{X},
\end{equation}
where $\phi_{RA}^{\rho}$ is a purification of $\rho$ and $\{|x\rangle\}$ is an
orthonormal basis for the classical output $X$ of the measurement. He found
that the optimal rate of measurement compression is equal to the mutual
information of the measurement $I(R;X)_{\sigma}$.

After Winter's development, Ref.~\cite{BHH08}\ suggested that the information
gain of the measurement should be defined as its mutual information. The
advantage of such an approach is that the mutual information $I(R;X)_{\sigma}$
is non-negative and has a clear operational interpretation. Furthermore, it is
equal to the entropy reduction in \eqref{eq:entropy-reduction} for efficient
measurements \cite{BHH08} and thus connects with Groenewold's original intuition.

Winter's result was later extended in two different directions. First,
Ref.~\cite{WHBH12}\ allowed for a correlated initial state $\rho_{AB}$, shared
between the sender and receiver before communication begins. In this case, the
optimal rate at which the sender needs to transmit classical information in
order to simulate the measurement is equal to the conditional mutual
information $I(R;X|B)_{\omega}$, where the conditional mutual information is
with respect to the following state:%
\begin{equation}
\omega_{RBX}\equiv\sum_{x}\operatorname{Tr}_{A^{\prime}}\{(\operatorname{id}%
_{R}\otimes\mathcal{N}_{A\rightarrow A^{\prime}}^{x})(\phi_{RAB}^{\rho
})\}\otimes|x\rangle\langle x|_{X},
\end{equation}
and $\phi_{RAB}^{\rho}$ is a purification of $\rho_{AB}$. We can thus call
$I(R;X|B)_{\omega}$ the information gain in the presence of quantum side
information (IG-QSI), and the information-processing task is known as
measurement compression with quantum side information \cite{WHBH12}. In
general, the IG-QSI is smaller than $I(RB;X)_{\omega}$, which is the rate at
which classical communication would need to be transmitted if the receiver
does not make use of the $B$ system. The other extension of Winter's result
was to determine the rate required to simulate the instrument on an arbitrary
input state, and the optimal rate was proved to be equal to the optimized
information gain%
\begin{equation}
\max_{\rho}I(R;X)_{\sigma},
\end{equation}
where the optimization is with respect to all input states $\rho_{A}$ leading
to a purification $\phi_{RA}^{\rho}$ \cite{BRW14}.

\subsection{General bounds on the information gain}

\label{sec:bound-inter}

Let us consider now the channel $\mathcal{N}$ associated to a quantum
instrument $\{\mathcal{N}^{x} \}$, as defined in (\ref{eq:average-instr}). By
defining the state $\sigma_{A^{\prime}X}$ as
\begin{align}
\sigma_{A^{\prime}X}  &  \equiv\mathcal{N}_{A\to A^{\prime}X}(\rho_{A})\\
&  =\sum_{x}\mathcal{N}^{x}_{A\to A^{\prime}}(\rho_{A})\otimes|x\rangle\langle x|_{X}\\
&  \equiv\sum_{x}p_{X}(x)\rho^{x}_{A^{\prime}}\otimes|x\rangle\langle x|_{X},
\end{align}
Theorem~\ref{thm:entropy-gain} in this case gives
\begin{align}
&  H(A^{\prime}X)_{\sigma}-H(A)_{\rho}\nonumber\\
&  =H(X)_{\sigma}+\sum_{x}p_{X}(x)H(\rho^{x}_{A^{\prime}})-H(\rho_{A})\\
&  =H(X)_{\sigma}-I_{G}(\{\mathcal{N}^{x} \},\rho_{A})\\
&  \ge D(\rho_{A}\|(\mathcal{N}^{\dag}\circ\mathcal{N})(\rho_{A})),
\end{align}
namely,
\begin{equation}
I_{G}(\{\mathcal{N}^{x} \},\rho_{A})\le H(X)_{\sigma}-D(\rho_{A}%
\|(\mathcal{N}^{\dag}\circ\mathcal{N})(\rho_{A})).
\end{equation}
The above upper bound on Groenewold's information gain is valid for any
quantum instrument $\{\mathcal{N}^{x} \}$ and any state $\rho$.

A much tighter bound can be given if the instrument is efficient. In this
case, it is easy to prove that the channel $\mathcal{N}$ defined in
(\ref{eq:average-instr}) is always subunital. This a consequence of the fact
that, if $\mathcal{N}^{x}_{A\to A^{\prime}}(C_{A})=V^{x}_{A\to A^{\prime}%
}\sqrt{\Lambda^{x}_{A}}C_{A}\sqrt{\Lambda^{x}_{A}}(V^{x}_{A\to A^{\prime}%
})^{\dag}$, where $\Lambda^{x}_{A} $ are POVM elements and $V^{x}_{A\to
A^{\prime}}$ are isometries, then $\mathcal{N}^{X}_{A\to A^{\prime}%
}(\openone_{A})=V^{x}_{A\to A^{\prime}}\Lambda^{x}_{A}(V^{x}_{A\to A^{\prime}%
})^{\dag}\le\openone_{A^{\prime}}$, for all $x$. Moreover, for efficient
measurements Groenewold's information gain and the mutual information of the
measurement $I(R;X)_{\sigma}$ are equal~\cite{BHH08}.

Thus, for efficient quantum instruments, Theorem~\ref{thm:entropy-gain} leads
to the following bound:
\begin{align}
H(X)_{\sigma}-I(R;X)_{\sigma}  &  =H(X|R)_{\sigma}\label{eq:meas-ent-diff}\\
&  \ge D(\rho_{A}\|(\mathcal{R}\circ\mathcal{N})(\rho_{A})).
\end{align}
where $\mathcal{R}$ is a recovery channel independent of $\rho$. In other
words, whenever the reference $R$ and the classical outcome $X$ are almost
perfectly correlated, i.e., $H(X|R)_{\sigma}\approx0$, then the action of the
instrument on the input state $\rho$ can be almost perfectly corrected on
average, i.e., $D(\rho_{A}\|(\mathcal{R}\circ\mathcal{N})(\rho_{A}))\approx0$.

We notice here that the quantity in~(\ref{eq:meas-ent-diff}) has been given an
interesting thermodynamical interpretation in Ref.~\cite{Jacobs09}, so that
the above bound can be seen as a strengthening of the second law for efficient
quantum measurements.

The above bound also provides a way to quantify, in an information-theoretic
way, ``how close'' a given POVM is to the ideal measurement of an observable:
one just need to prepare a state $\rho$ that commutes with that observable
(for example, the maximally mixed state $\openone/d$) and feed it through an
efficient measurement of the given POVM. The entropy difference in
(\ref{eq:meas-ent-diff}) is then a good indicator of such a ``closeness,''
being null whenever the POVM corresponds to a sharp measurement along the
diagonalizing basis. This method is somewhat similar to the approach
introduced in Ref.~\cite{BHOW14}.

\subsection{Information gain without quantum side information}

In what follows, we demonstrate how the refined entropy inequalities in
\eqref{eq:stronger-recovery} and \eqref{eq:CMI-recovery-stronger} have
implications for the information gain of a quantum measurement, both without
and with quantum side information. We begin with the simpler case of
information gain without quantum side information, a scenario considered in
\cite{BHH08}. The theorem below gives a lower bound on the information gain in
terms of how well one can recover from the action of an efficient measurement.
It can be viewed as a corollary of the more general statement given in
Theorem~\ref{thm:info-gain-QSI} in the next section.

\begin{theorem}
\label{thm:info-gain-no-QSI}Let $\rho\in\mathcal{D}(\mathcal{H}_{A})$ and
$\{\mathcal{N}^{x}\}$ be a quantum instrument, where each $\mathcal{N}%
^{x}:\mathcal{L}(\mathcal{H}_{A})\rightarrow\mathcal{L}(\mathcal{H}%
_{A^{\prime}})$. Then the following inequality holds%
\begin{equation}
I(R;X)_{\sigma}\geq-\log F(\sigma_{RX},\sigma_{R}\otimes\sigma_{X}).
\label{eq:meas-mut-info-fid}%
\end{equation}
If the quantum instrument is efficient, then the above inequality implies that%
\begin{equation}
I(R;X)_{\sigma}\geq-2\log\left[  \sum_{x}p_{X}(x)\sqrt{F}(\mathcal{U}%
_{A^{\prime}\rightarrow A}^{x}(\phi_{RA^{\prime}}^{\rho_{x}}),\phi_{RA}^{\rho
})\right]  , \label{eq:meas-mut-info-fid-1}%
\end{equation}
for some collection $\{\mathcal{U}_{A^{\prime}\rightarrow A}^{x}\}$, where
each $\mathcal{U}_{A^{\prime}\rightarrow A}^{x}$ is an isometric quantum
channel, $\phi_{RA^{\prime}}^{\rho_{x}}$ is a purification of $\rho
_{A^{\prime}}^{x}$ defined in \eqref{eq:post-measurement-states}, and
$p_{X}(x)$ is defined in \eqref{eq:post-measurement-states}.
\end{theorem}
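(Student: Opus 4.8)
The plan is to prove the two inequalities in sequence, with the first being essentially a direct application of the fidelity--relative-entropy relationship and the second requiring the recoverability machinery specialized to efficient measurements. For \eqref{eq:meas-mut-info-fid}, I would start from the identity $I(R;X)_\sigma = D(\sigma_{RX}\Vert\sigma_R\otimes\sigma_X)$ in \eqref{eq:mut-info-rel-ent}, and then apply the bound $D(P\Vert Q)\geq-\log F(P,Q)$ from \eqref{eq:rel-ent-fid} with $P=\sigma_{RX}$ and $Q=\sigma_R\otimes\sigma_X$. This immediately yields the claimed inequality, so the first part is routine.

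The substance is in deriving \eqref{eq:meas-mut-info-fid-1} from \eqref{eq:meas-mut-info-fid} under the efficiency assumption. First I would recognize that $\sigma_{RX}$ is a classical--quantum state, so that I can invoke the refined recovery inequality \eqref{eq:stronger-recovery} or \eqref{eq:CMI-recovery-stronger} specialized to the structure at hand; the recovery map $\mathcal{R}_{\sigma_{RX},\operatorname{Tr}_R}^{t/2}$ acting on $\sigma_X$ should attempt to rebuild the correlation with $R$. Because the instrument is efficient, each $\mathcal{N}^x$ has a single Kraus operator $V^x\sqrt{\Lambda^x}$ with $V^x$ an isometry, and the post-measurement state $\rho_{A'}^x$ has a purification $\phi_{RA'}^{\rho_x}$ that is directly related to the purification $\phi_{RA}^\rho$ of the input via the isometric structure. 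I would use Uhlmann's theorem \eqref{eq:uhlmann-thm} to convert the fidelity between recovered states into a fidelity between purifications, where the freedom in choosing the isometry in Uhlmann's theorem is exactly what produces the collection $\{\mathcal{U}_{A'\rightarrow A}^x\}$ of isometric channels.

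The key technical bridge is the direct-sum property of the fidelity \eqref{eq:direct-sum-fid}: since the relevant states are classical--quantum with the classical register $X$, the root fidelity factors as a sum $\sum_x p_X(x)\sqrt{F}(\cdot,\cdot)$ over the measurement outcomes, which is precisely the form appearing inside the logarithm in \eqref{eq:meas-mut-info-fid-1}. I would combine this with the factor-of-two relating $-\log F$ and $-2\log\sqrt{F}$ to match the stated constant. The main obstacle, and the step I expect to require the most care, is verifying that the recovery map supplied by \eqref{eq:stronger-recovery} for this classical--quantum structure indeed acts on each outcome branch as an isometric recovery $\mathcal{U}_{A'\rightarrow A}^x$ reversing the single-Kraus-operator action $V^x\sqrt{\Lambda^x}$; this is where efficiency is essential, since a general instrument with multiple Kraus operators per outcome would not admit a perfectly reversing isometry and the clean per-outcome form would break down. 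I would handle this by explicitly tracking how the purification $\phi_{RA'}^{\rho_x}$ is mapped back toward $\phi_{RA}^\rho$ and appealing to Uhlmann's theorem outcome-by-outcome to certify the existence of each $\mathcal{U}_{A'\rightarrow A}^x$.
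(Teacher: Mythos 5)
Your first part is exactly the paper's argument: write $I(R;X)_{\sigma}=D(\sigma_{RX}\Vert\sigma_{R}\otimes\sigma_{X})$ and apply \eqref{eq:rel-ent-fid}. For the second part you have assembled the right ingredients (direct-sum property, Uhlmann's theorem, the factor of two from $\sqrt{F}$), but the step you single out as the crux --- ``verifying that the recovery map supplied by \eqref{eq:stronger-recovery} acts on each outcome branch as an isometric recovery reversing $V^{x}\sqrt{\Lambda^{x}}$'' --- is both unnecessary and would fail as described. No recovery machinery is used here at all: \eqref{eq:meas-mut-info-fid-1} follows from \eqref{eq:meas-mut-info-fid} purely by computation. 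Indeed, if you did specialize \eqref{eq:stronger-recovery} to $\mathcal{N}=\operatorname{Tr}_{R}$ and $\sigma=\sigma_{R}\otimes\sigma_{X}$, the rotated Petz map would simply tensor $\sigma_{X}$ with $\sigma_{R}$ and act on the classical register $X$; it never touches $A^{\prime}$, so there is no sense in which it ``acts on each outcome branch as $\mathcal{U}^{x}_{A^{\prime}\to A}$,'' and chasing that identification would lead nowhere.

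The actual argument is shorter. Both $\sigma_{RX}=\sum_{x}p_{X}(x)\,\sigma_{R}^{x}\otimes|x\rangle\langle x|_{X}$ and $\sigma_{R}\otimes\sigma_{X}=\sum_{x}p_{X}(x)\,\sigma_{R}\otimes|x\rangle\langle x|_{X}$ are classical--quantum with the same distribution on $X$, so \eqref{eq:direct-sum-fid} gives
\begin{equation}
\sqrt{F}(\sigma_{RX},\sigma_{R}\otimes\sigma_{X})=\sum_{x}p_{X}(x)\,\sqrt{F}(\sigma_{R}^{x},\sigma_{R}).
\end{equation}
The only place efficiency enters is the next identification: when each $\mathcal{N}^{x}$ has a single Kraus operator, the conditional post-measurement state $\mathcal{N}^{x}_{A\to A^{\prime}}(\phi^{\rho}_{RA})/p_{X}(x)$ is \emph{pure}, hence it is simultaneously a purification $\phi^{\rho_{x}}_{RA^{\prime}}$ of $\rho^{x}_{A^{\prime}}$ and of $\sigma_{R}^{x}=\phi_{R}^{\rho_{x}}$ (and $\sigma_{R}=\phi_{R}^{\rho}$). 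Uhlmann's theorem \eqref{eq:uhlmann-thm}, applied outcome-by-outcome to $F(\phi_{R}^{\rho_{x}},\phi_{R}^{\rho})$ with the optimizing isometry placed on the purifying system $A^{\prime}$, then yields $F(\phi_{R}^{\rho_{x}},\phi_{R}^{\rho})=F(\mathcal{U}^{x}_{A^{\prime}\to A}(\phi^{\rho_{x}}_{RA^{\prime}}),\phi^{\rho}_{RA})$; this, not any reversal of $V^{x}\sqrt{\Lambda^{x}}$, is the origin of the collection $\{\mathcal{U}^{x}_{A^{\prime}\to A}\}$ (the ``reversal'' is only approximate, with the fidelity quantifying its quality). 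Without efficiency the state on $RA^{\prime}$ is mixed, its $R$-marginal still equals $\sigma_{R}^{x}$ but it is no longer a purification, and the Uhlmann lift is unavailable --- that is the precise reason the hypothesis is needed, rather than the nonexistence of a perfectly reversing isometry.
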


\begin{proof}
The inequality in \eqref{eq:meas-mut-info-fid} is a simple consequence of
\eqref{eq:mut-info-rel-ent} and \eqref{eq:rel-ent-fid}. The inequality in
\eqref{eq:meas-mut-info-fid-1}\ follows because%
\begin{equation}
\sqrt{F}(\sigma_{RX},\sigma_{R}\otimes\sigma_{X})=\sum_{x}p_{X}(x)\sqrt
{F}(\phi_{R}^{\rho_{x}},\phi_{R}^{\rho}).
\end{equation}
Applying Uhlmann's theorem (see \eqref{eq:uhlmann-thm}), we can conclude that
there exist isometric channels $\mathcal{U}_{A^{\prime}\rightarrow A}^{x}$
such that $F(\phi_{R}^{\rho_{x}},\phi_{R}^{\rho})=F(\mathcal{U}_{A^{\prime
}\rightarrow A}^{x}(\phi_{RA^{\prime}}^{\rho_{x}}),\phi_{RA}^{\rho})$ for all
$x$.
\end{proof}

The implication of the inequality in \eqref{eq:meas-mut-info-fid-1} is that if
the information gain of the measurement is small, so that%
\begin{equation}
I(R;X)_{\sigma}\approx0, \label{eq:small-info-gain}%
\end{equation}
then it is possible to reverse the action of the measurement approximately, in
such a way as to restore the post-measurement state to the original state with
a fidelity%
\begin{equation}
\sum_{x}p_{X}(x)\sqrt{F}(\mathcal{U}_{A^{\prime}\rightarrow A}^{x}%
(\phi_{RA^{\prime}}^{\rho_{x}}),\phi_{RA}^{\rho})\approx1.
\end{equation}
We can thus view this result as a one-sided information-disturbance trade-off.
Note that \cite[Theorem~1]{BHH08} contains an observation related to this. The
observation above is also related to the general one from \cite{qip2002schu},
but the result above is stronger:\ an inability to find correction isometries,
which leads to a small fidelity, is a witness to having a large information
gain $I(R;X)_{\sigma}$, due to the presence of the negative logarithm in \eqref{eq:meas-mut-info-fid-1}.

The inequality in \eqref{eq:meas-mut-info-fid-1} also has an operational
implication for Winter's measurement compression task. If the information gain
is small, so that \eqref{eq:small-info-gain} holds, then the sender and
receiver can simulate the measurement with a high fidelity per copy of the
source state, in such a way that the sender does not need to transmit any
classical information at all. The receiver can just prepare many copies of
$\rho_{A}$ locally, perform the measurements, and deliver the outputs of the
measurements as the classical data. This situation occurs because the
reference system $R$ is approximately decoupled from the classical output, in
the sense that $F(\sigma_{RX},\sigma_{R}\otimes\sigma_{X})\approx1$ if
$I(R;X)_{\sigma}\approx0$.

\subsection{Information gain with quantum side information}

We can readily extend the above results to the case of quantum side
information, by employing the inequality in \eqref{eq:CMI-recovery}. This
leads to the following theorem:

\begin{theorem}
\label{thm:info-gain-QSI}Let $\rho_{AB}\in\mathcal{D}(\mathcal{H}_{A}%
\otimes\mathcal{H}_{B})$ and $\{\mathcal{N}^{x}\}$ be a quantum instrument,
where each $\mathcal{N}^{x}:\mathcal{L}(\mathcal{H}_{A})\rightarrow
\mathcal{L}(\mathcal{H}_{A^{\prime}})$. Then the following inequality holds%
\begin{multline}
I(R;X|B)_{\omega}\geq\label{eq:info-disturb-QSI}\\
-2\int_{-\infty}^{\infty}dt\ p(t)\log\left[  \sum_{x}p_{X}(x)\sqrt{F}%
(\omega_{RB}^{x},\mathcal{R}_{B}^{x,t/2}(\omega_{RB}))\right]  ,
\end{multline}
where%
\begin{equation}
\omega_{RBX}\equiv\sum_{x}\operatorname{Tr}_{A^{\prime}}\{\mathcal{N}%
_{A\rightarrow A^{\prime}}^{x}(\phi_{RAB}^{\rho})\}\otimes|x\rangle\langle
x|_{X},
\end{equation}
$\phi_{RAB}^{\rho}$ is a purification of $\rho_{AB}$,%
\begin{align}
\omega_{RBA^{\prime}}^{x}  &  \equiv\frac{\mathcal{N}_{A\rightarrow A^{\prime
}}^{x}(\phi_{RAB}^{\rho})}{p_{X}(x)},\\
p_{X}(x)  &  \equiv\operatorname{Tr}\{\mathcal{N}_{A\rightarrow A^{\prime}%
}^{x}(\phi_{RAB}^{\rho})\},
\end{align}
$\{p_{X}(x)\mathcal{R}_{B}^{x,t/2}\}$ is a quantum instrument defined by%
\begin{equation}
\mathcal{R}_{B}^{x,t/2}(\omega_{RB})\equiv\left(  \omega_{B}^{x}\right)
^{\frac{1-it}{2}}\omega_{B}^{-\frac{1-it}{2}}(\omega_{RB})\omega_{B}%
^{-\frac{1+it}{2}}\left(  \omega_{B}^{x}\right)  ^{\frac{1+it}{2}},
\label{eq:instrument-maps}%
\end{equation}
and $p(t)$ is defined in \eqref{eq:pt-dist}. If the instrument $\{\mathcal{N}%
^{x}\}$ is efficient, then the following inequality holds as well:%
\begin{equation}
I(R;X|B)_{\omega}\geq-2\int_{-\infty}^{\infty}dt\ p(t)\log\left[  \sum
_{x}p_{X}(x)\sqrt{F_{x,t}}\right]  , \label{eq:meas-QSI-correct}%
\end{equation}
for some collection $\{\mathcal{U}_{A\rightarrow A^{\prime}}^{x,t}\}$, where%
\begin{equation}
F_{x,t}\equiv F(\omega_{RBA^{\prime}}^{x},(\mathcal{R}_{B}^{x,t/2}%
\otimes\mathcal{U}_{A\rightarrow A^{\prime}}^{x,t})(\phi_{RBA}^{\rho}))
\end{equation}
and each $\mathcal{U}_{A\rightarrow A^{\prime}}^{x,t}$ is an isometric quantum channel.
\end{theorem}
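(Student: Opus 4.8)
The plan is to obtain \eqref{eq:info-disturb-QSI} as a direct specialization of the integrated conditional-mutual-information recoverability bound \eqref{eq:CMI-recovery-stronger}, and then to upgrade it to \eqref{eq:meas-QSI-correct} in the efficient case by purifying with Uhlmann's theorem. First I would note that $\omega_{RBX}$ is a classical--quantum state and that conditional mutual information is symmetric in its first two arguments, so $I(R;X|B)_{\omega}=I(X;R|B)_{\omega}$. I then apply \eqref{eq:CMI-recovery-stronger} with the system identification $A\mapsto X$, $B\mapsto R$, and $C\mapsto B$, which gives
\begin{equation}
I(R;X|B)_{\omega}\geq-\int_{-\infty}^{\infty}dt\ p(t)\log F\!\left(\omega_{RBX},\mathcal{R}_{\omega_{XB},\operatorname{Tr}_{X}}^{t/2}(\omega_{RB})\right).
\end{equation}
The recovery map here adjoins the classical register $X$ back onto the quantum marginal $\omega_{RB}$, with $R$ a spectator.

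The first real step is to evaluate this recovery map explicitly. Since $\omega_{XB}=\sum_{x}p_{X}(x)|x\rangle\langle x|_{X}\otimes\omega_{B}^{x}$ is block diagonal in the $X$ basis, its complex powers $\omega_{XB}^{(1\mp it)/2}$ factor blockwise, and because the inner operator $\omega_{B}^{-(1-it)/2}\omega_{RB}\,\omega_{B}^{-(1+it)/2}$ carries no $X$ dependence, the orthogonal projectors $|x\rangle\langle x|_{X}$ annihilate all cross terms. Using the identity $(p_{X}(x))^{(1-it)/2}(p_{X}(x))^{(1+it)/2}=p_{X}(x)$, the substitution into the explicit formula for $\mathcal{R}_{\rho_{AC},\operatorname{Tr}_{A}}^{t/2}$ collapses to
\begin{equation}
\mathcal{R}_{\omega_{XB},\operatorname{Tr}_{X}}^{t/2}(\omega_{RB})=\sum_{x}p_{X}(x)|x\rangle\langle x|_{X}\otimes\mathcal{R}_{B}^{x,t/2}(\omega_{RB}),
\end{equation}
with $\mathcal{R}_{B}^{x,t/2}$ exactly the map defined in \eqref{eq:instrument-maps}. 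Both $\omega_{RBX}$ and this operator are block diagonal in $X$ with the common scalar prefactors $p_{X}(x)$, so computing the root fidelity $\sqrt{F}(\cdot,\cdot)=\left\Vert\sqrt{\cdot}\sqrt{\cdot}\right\Vert_{1}$ blockwise (the $\mathcal{L}_{+}$ extension of the direct-sum property \eqref{eq:direct-sum-fid}) yields $\sqrt{F}(\omega_{RBX},\mathcal{R}_{\omega_{XB},\operatorname{Tr}_{X}}^{t/2}(\omega_{RB}))=\sum_{x}p_{X}(x)\sqrt{F}(\omega_{RB}^{x},\mathcal{R}_{B}^{x,t/2}(\omega_{RB}))$. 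Squaring inside the logarithm then gives \eqref{eq:info-disturb-QSI}.

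For the efficient case, the idea is to lift each blockwise fidelity on $RB$ to the purified systems $RBA^{\prime}$. When the instrument is efficient each $\mathcal{N}^{x}$ has a single Kraus operator, so $\omega_{RBA^{\prime}}^{x}\propto\mathcal{N}^{x}(\phi_{RAB}^{\rho})$ is a pure state and hence a purification of $\omega_{RB}^{x}$. Simultaneously, $\mathcal{R}_{B}^{x,t/2}$ is conjugation by the single operator $(\omega_{B}^{x})^{(1-it)/2}\omega_{B}^{-(1-it)/2}$ acting on $B$, so for any isometric channel $\mathcal{U}_{A\rightarrow A^{\prime}}^{x,t}$ the state $(\mathcal{R}_{B}^{x,t/2}\otimes\mathcal{U}_{A\rightarrow A^{\prime}}^{x,t})(\phi_{RBA}^{\rho})$ is pure on $RBA^{\prime}$; a short marginal computation---the isometry preserves $\omega_{RB}$ before $\mathcal{R}_{B}^{x,t/2}$ is applied---shows it is a purification of $\mathcal{R}_{B}^{x,t/2}(\omega_{RB})$ with purifying system $A^{\prime}$. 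Invoking Uhlmann's theorem \eqref{eq:uhlmann-thm} and absorbing the optimal isometry on $A^{\prime}$ into $\mathcal{U}_{A\rightarrow A^{\prime}}^{x,t}$, I obtain $F(\omega_{RB}^{x},\mathcal{R}_{B}^{x,t/2}(\omega_{RB}))=F_{x,t}$ for each $x$ and $t$. Substituting this term by term into \eqref{eq:info-disturb-QSI} produces \eqref{eq:meas-QSI-correct}.

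I expect the main obstacle to be the explicit evaluation of $\mathcal{R}_{\omega_{XB},\operatorname{Tr}_{X}}^{t/2}$ and recognizing that it reduces precisely to the instrument $\{p_{X}(x)\mathcal{R}_{B}^{x,t/2}\}$ of \eqref{eq:instrument-maps}; this is where the block-diagonal structure and the cancellation of the $p_{X}(x)$ powers must be handled with care. The efficient-case step is conceptually routine once one observes the purity of $\omega_{RBA^{\prime}}^{x}$ and the single-Kraus form of $\mathcal{R}_{B}^{x,t/2}$, but the Uhlmann bookkeeping---matching purifying systems and verifying that the optimizing isometry can be folded into $\mathcal{U}_{A\rightarrow A^{\prime}}^{x,t}$---still requires attention.
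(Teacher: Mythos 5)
Your proposal is correct and follows essentially the same route as the paper: apply the integrated bound \eqref{eq:CMI-recovery-stronger} with $\sigma=\omega_{XB}$ and $\mathcal{N}=\operatorname{Tr}_{X}$, use the block-diagonal structure to identify the rotated Petz map with the instrument $\{p_{X}(x)\mathcal{R}_{B}^{x,t/2}\}$, invoke the direct-sum property of the fidelity, and finish the efficient case with Uhlmann's theorem applied to the rank-one operators $\omega_{RBA^{\prime}}^{x}$ and $\mathcal{R}_{B}^{x,t/2}(\phi_{RBA}^{\rho})$. The only difference is that you spell out the blockwise cancellation of the $p_{X}(x)$ powers, which the paper compresses into ``by a direct calculation.''
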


\begin{proof}
We begin by proving the inequality in \eqref{eq:info-disturb-QSI}. Consider
that%
\begin{multline}
I(R;X|B)_{\omega}\geq\\
-\int_{-\infty}^{\infty}dt\ p(t)\log F(\omega_{RBX},\mathcal{R}_{B\rightarrow
BX}^{t/2}(\omega_{RB})),
\end{multline}
where%
\begin{equation}
\mathcal{R}_{B\rightarrow BX}^{t/2}(\omega_{RB})=\omega_{BX}^{\frac{1-it}{2}%
}\omega_{B}^{-\frac{1-it}{2}}\omega_{RB}\omega_{B}^{-\frac{1+it}{2}}%
\omega_{BX}^{\frac{1+it}{2}},
\end{equation}
which is a direct consequence of \eqref{eq:CMI-recovery-stronger}. By a direct
calculation, we find that%
\begin{equation}
\mathcal{R}_{B\rightarrow BX}^{t/2}(\omega_{RB})=\sum_{x}p_{X}(x)|x\rangle
\langle x|_{X}\otimes\mathcal{R}_{B}^{x,t/2}(\omega_{RB}),
\end{equation}
with $\mathcal{R}_{B}^{x,t/2}$ defined in \eqref{eq:instrument-maps}. This
then leads to the inequality in \eqref{eq:info-disturb-QSI}, by applying the
direct sum property of fidelity. The inequality in \eqref{eq:meas-QSI-correct}
is an application of Uhlmann's theorem, after observing that the rank-one
operator $\mathcal{R}_{B}^{x,t/2}(\phi_{RBA}^{\rho})$ purifies $\mathcal{R}%
_{B}^{x,t/2}(\omega_{RB})$ and the rank-one operator $\omega_{RBA^{\prime}%
}^{x}$ purifies $\omega_{RB}^{x}$. The aforementioned operators are rank-one
if the measurement is efficient (which is what we assumed in the statement of
the theorem).
\end{proof}

The implications of Theorem~\ref{thm:info-gain-QSI}\ are similar to those of
Theorem~\ref{thm:info-gain-no-QSI}, except they apply to a setting in which
quantum side information is available. If the information gain of the
measurement is small, so that%
\begin{equation}
I(R;X|B)_{\omega}\approx0, \label{eq:small-IG-QSI}%
\end{equation}
then it is possible to reverse the action of the measurement approximately, in
such a way as to restore the post-measurement state of systems $RA^{\prime}$
to the original state on systems $RA$ with a fidelity larger than%
\begin{equation}
\int_{-\infty}^{\infty}dt\ p(t)\sum_{x}p_{X}(x)\sqrt{F_{x,t}}\approx1.
\end{equation}
This follows from the concavity of the fidelity. The reversal operation
consists of two steps. First, Bob performs the instrument $\{p_{X}%
(x)\mathcal{R}_{B}^{x,t/2}\}$. He then forwards the outcomes to Alice, who
performs a channel corresponding to the inverse of the isometric quantum
channel $\mathcal{U}_{A\rightarrow A^{\prime}}^{x,t}$. Then, the average
fidelity is high if the information gain is small. We can view this result as
a one-sided information-disturbance trade-off which extends the aforementioned
one without quantum side information.

The inequality in \eqref{eq:meas-QSI-correct} also has an operational
implication for measurement compression with quantum side information
\cite{WHBH12}. If the IG-QSI\ is small, so that \eqref{eq:small-IG-QSI} holds,
then the sender and receiver can simulate the measurement with a high fidelity
per copy of the source state, in such a way that the sender does not need to
transmit any classical information at all. The receiver can just perform the
instrument $\{p_{X}(x)\mathcal{R}_{B}^{x,t/2}\}$ with probability $p(t)$ on
the individual $B$ systems of many copies of $\rho_{AB}$ and deliver the
classical outputs of the measurements as the classical data. This situation
occurs because the $X$ system of $\omega_{RBX}$ is approximately recoverable
from $B$ alone, in the sense that $\int_{-\infty}^{\infty}dt\ p(t)\sum
_{x}p_{X}(x)\sqrt{F_{x,t}}\approx1$ if $I(R;X|B)_{\omega}\approx0$. This
latter result might have implications for quantum communication complexity
(cf.~\cite{Touchette14}).

\section{Entropic disturbance}

Ref.~\cite{BH08} (see in particular Section~5 therein) considered the
possibility of introducing an entropic measure of average disturbance as
follows. Imagine that an initial ensemble of quantum states $\mathcal{E}%
=\{p_{X}(x);\rho_{A}^{x}\}_{x}$ is fed through a quantum channel
$\mathcal{N}:\mathcal{L}(\mathcal{H}_{A})\rightarrow\mathcal{L}(\mathcal{H}%
_{A^{\prime}})$. Consider the Holevo information of the initial ensemble
$\mathcal{E}$:
\begin{equation}
\chi(\mathcal{E})\equiv H(\rho_{A}^{\mathcal{E}})-\sum_{x}p_{X}(x)H(\rho
_{A}^{x}),
\end{equation}
where $\rho_{A}^{\mathcal{E}}$ denotes the average quantum state $\sum
_{x}p_{X}(x)\rho_{A}^{x}$. By the monotonicity of the Holevo information, the
following inequality holds
\begin{equation}
\Delta\chi(\mathcal{E})\equiv\chi(\mathcal{E})-\chi(\mathcal{N}(\mathcal{E}%
))\geq0,
\end{equation}
where by $\mathcal{N}(\mathcal{E})$ we mean the output ensemble $\{p_{X}%
(x);\mathcal{N}(\rho_{A^{\prime}}^{x})\}_{x}$.

It is known that the condition $\Delta\chi(\mathcal{E})=0$ implies the
existence of a recovery CPTP linear map $\mathcal{R}:\mathcal{L}%
(\mathcal{H}_{A^{\prime}})\to\mathcal{L}(\mathcal{H}_{A})$ such that
\begin{equation}
\mathcal{R}\circ\mathcal{N}(\rho^{x}_{A})=\rho^{x}_{A},
\end{equation}
for all $x$~\cite{HJPW04}. In Ref.~\cite{BH08} the question was considered,
whether a similar conclusion would hold also in the approximate case, but an
answer was given only in the case in which the input ensemble consists of two
mutually unbiased bases distributed with uniform prior, as done in~\cite{CW05}.

Recent results about approximate recoverability give a solution to this
problem, by demonstrating that there exists a recovery channel that can
approximately recover if the loss of Holevo information is small. In fact, a
special case of this problem was already solved in \cite[Corollary~16]{W15}
when $\mathcal{N}$ is a measurement channel. Here we establish the following
more general theorem:

\begin{theorem}
Let $\mathcal{E}=\{p_{X}(x),\rho^{x}_{A} \}$ be an ensemble of states in
$\mathcal{D}(\mathcal{H}_{A})$ and $\mathcal{N}:\mathcal{L}(\mathcal{H}%
_{A})\to\mathcal{L}(\mathcal{H}_{A^{\prime}})$ a quantum channel. Then there
exists a recovery channel $\mathcal{R}:\mathcal{L}(\mathcal{H}_{A^{\prime}%
})\to\mathcal{L}(\mathcal{H}_{A})$ such that
\begin{multline}
\chi(\mathcal{E})-\chi(\mathcal{N}(\mathcal{E}))\\
\ge-2\log\sum_{x} p_{X}(x)\sqrt{F}(\rho^{x}_{A},(\mathcal{R}\circ
\mathcal{N})(\rho^{x}_{A})). \label{eq:holevo-loss-recover}%
\end{multline}

\end{theorem}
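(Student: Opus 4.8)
The plan is to reduce the entropic-disturbance statement to the recoverability theorem in \eqref{eq:recovery-fid} by encoding the ensemble $\mathcal{E}$ and its Holevo quantities as relative entropies with respect to a suitable reference state, so that the loss of Holevo information $\chi(\mathcal{E})-\chi(\mathcal{N}(\mathcal{E}))$ appears as a difference $D(\rho\Vert\sigma)-D(\mathcal{N}'(\rho)\Vert\mathcal{N}'(\sigma))$ for an appropriately chosen channel $\mathcal{N}'$. The natural device is to introduce a classical register $X$ that purifies the ensemble index. Concretely, I would form the classical--quantum state $\rho_{XA}\equiv\sum_x p_X(x)\,|x\rangle\langle x|_X\otimes\rho^x_A$ and compare it to the product reference $\rho_X\otimes\rho^{\mathcal{E}}_A$. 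Using \eqref{eq:mut-info-rel-ent} one recognizes $\chi(\mathcal{E})=I(X;A)_{\rho}=D(\rho_{XA}\Vert\rho_X\otimes\rho^{\mathcal{E}}_A)$, and likewise $\chi(\mathcal{N}(\mathcal{E}))=D((\operatorname{id}_X\otimes\mathcal{N})(\rho_{XA})\Vert\rho_X\otimes\mathcal{N}(\rho^{\mathcal{E}}_A))$, so the Holevo loss is exactly the drop in relative entropy under the channel $\operatorname{id}_X\otimes\mathcal{N}$ acting on these two inputs.

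Having set this up, I would apply the recoverability theorem \eqref{eq:recovery-fid} with the identifications $\rho\mapsto\rho_{XA}$, $\sigma\mapsto\rho_X\otimes\rho^{\mathcal{E}}_A$, and $\mathcal{N}\mapsto\operatorname{id}_X\otimes\mathcal{N}$. This yields
\begin{equation}
\chi(\mathcal{E})-\chi(\mathcal{N}(\mathcal{E}))\geq-\log F(\rho_{XA},(\widetilde{\mathcal{R}}\circ(\operatorname{id}_X\otimes\mathcal{N}))(\rho_{XA})),
\end{equation}
where $\widetilde{\mathcal{R}}$ is the universal recovery channel of the form in \eqref{eq:recovery-map} built from $\sigma=\rho_X\otimes\rho^{\mathcal{E}}_A$ and $\mathcal{N}=\operatorname{id}_X\otimes\mathcal{N}$. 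The key structural point is that, because $\sigma$ is a product across $X$ and $A$ and the channel acts trivially on $X$, the recovery map $\widetilde{\mathcal{R}}$ acts as $\operatorname{id}_X\otimes\mathcal{R}$ for a single recovery channel $\mathcal{R}:\mathcal{L}(\mathcal{H}_{A'})\to\mathcal{L}(\mathcal{H}_A)$ that is \emph{independent of $x$}; this is what makes the final bound feature a \emph{single} $\mathcal{R}$ recovering every $\rho^x_A$ simultaneously.

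The last step is to evaluate the fidelity on the left- and right-hand states, which are both classical--quantum with the same classical register $X$: the input is $\sum_x p_X(x)|x\rangle\langle x|_X\otimes\rho^x_A$ and the recovered state is $\sum_x p_X(x)|x\rangle\langle x|_X\otimes(\mathcal{R}\circ\mathcal{N})(\rho^x_A)$. Applying the direct-sum property of the fidelity \eqref{eq:direct-sum-fid} (with matching classical distributions $p_X(x)$ on both sides) gives
\begin{equation}
\sqrt{F}(\rho_{XA},(\operatorname{id}_X\otimes(\mathcal{R}\circ\mathcal{N}))(\rho_{XA}))=\sum_x p_X(x)\sqrt{F}(\rho^x_A,(\mathcal{R}\circ\mathcal{N})(\rho^x_A)),
\end{equation}
and inserting this into the logarithm produces exactly \eqref{eq:holevo-loss-recover}, the factor of $2$ arising from $\log F=2\log\sqrt{F}$.

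\textbf{The main obstacle} I anticipate is verifying rigorously that the universal recovery channel $\widetilde{\mathcal{R}}$ associated with the product state $\rho_X\otimes\rho^{\mathcal{E}}_A$ and the $X$-trivial channel genuinely factorizes as $\operatorname{id}_X\otimes\mathcal{R}$ with an $x$-independent $\mathcal{R}$. Because the recovery map in \eqref{eq:recovery-map} involves the complex powers $\sigma^{(1\pm it)/2}$, $\mathcal{N}(\sigma)^{-(1\pm it)/2}$, the adjoint $\mathcal{N}^\dagger$, and the swiveling unitaries $\mathcal{U}_{\sigma,\pm t}$, one must check that each of these commutes with the identity action on $X$ when $\sigma$ is a tensor product and the channel leaves $X$ alone—so that the whole rotated Petz family, and hence the $p(t)$-average, acts as $\operatorname{id}_X$ on the $X$ system. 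This is essentially a computation with tensor-product operators, but it is the crux that justifies pulling a single $\mathcal{R}$ out of the bound; once it is established, the rest follows mechanically from \eqref{eq:direct-sum-fid}. (One may equivalently invoke \eqref{eq:CMI-recovery} with $A\to X$, $B\to R$, $C\to A'$ on a purified version to obtain the same factorized recovery directly.)
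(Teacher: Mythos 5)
Your proposal is correct and follows essentially the same route as the paper: embed the ensemble into the classical--quantum state $\rho_{XA}$, express the Holevo loss as a drop in relative entropy under $\operatorname{id}_X\otimes\mathcal{N}$, apply the recoverability bound \eqref{eq:recovery-fid}, observe that the recovery factorizes as $\operatorname{id}_X\otimes\mathcal{R}$ with an $x$-independent $\mathcal{R}$, and finish with the direct-sum property \eqref{eq:direct-sum-fid}. The only cosmetic difference is your choice of reference operator $\rho_X\otimes\rho^{\mathcal{E}}_A$ (so that the Holevo quantities appear directly as mutual informations) versus the paper's $I_X\otimes\rho_A$ (so that they appear as negative conditional entropies); both yield the same relative-entropy difference and the same factorized recovery map.
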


\begin{proof}
Introduce an auxiliary system $X$ and the bipartite classical-quantum state
\begin{equation}
\rho_{XA}\equiv\sum_{x}p_{X}(x)|x\rangle \langle x|_{X}\otimes\rho_{A}^{x},
\end{equation}
where the vectors $\{|x\rangle\}$ are orthonormal in the Hilbert space
$\mathcal{H}_{X}$. In an analogous way, we also write
\begin{equation}
\sigma_{XA^{\prime}}\equiv\sum_{x}p_{X}(x)|x\rangle \langle x|_{X}\otimes\mathcal{N}%
_{A\rightarrow A^{\prime}}(\rho_{A}^{x}).
\end{equation}
Then,
\begin{align}
&  \chi(\mathcal{E})-\chi(\mathcal{N}(\mathcal{E}))\nonumber\\
&  =I(X;A)_{\rho}-I(X;A^{\prime})_{\sigma}\\
&  =-H(X|A)_{\rho}+H(X|A^{\prime})_{\sigma}\\
&  =D(\rho_{XA}\Vert I_{X}\otimes\rho_{A})\nonumber\\
&  \quad-D((\operatorname{id}_{X}\otimes\mathcal{N}_{A\rightarrow A^{\prime}%
})(\rho_{XA})\Vert(\operatorname{id}_{X}\otimes\mathcal{N}_{A\rightarrow
A^{\prime}})(I_{X}\otimes\rho_{A})).
\end{align}
We now invoke (\ref{eq:recovery-fid}), noticing that, in this case, the
operator $\sigma$ has the special form $I_{X}\otimes\rho_{A}$ and the noise
acts only locally, i.e., it has the form $\operatorname{id}_{X}\otimes
\mathcal{N}_{A\rightarrow A^{\prime}}$. These two facts together imply that
(\ref{eq:recovery-fid}) can be written in this case as follows:
\begin{multline}
\chi(\mathcal{E})-\chi(\mathcal{N}(\mathcal{E}))\\
\geq-\log F(\rho_{XA},(\operatorname{id}_{X}\otimes\mathcal{R}_{A^{\prime
}\rightarrow A})(\sigma_{XA^{\prime}})),
\end{multline}
where $\mathcal{R}:\mathcal{L}(\mathcal{H}_{A^{\prime}})\rightarrow
\mathcal{L}(\mathcal{H}_{A})$ is a suitable recovery channel. Finally, we make
use of the direct sum property of the fidelity in (\ref{eq:direct-sum-fid}) to
establish \eqref{eq:holevo-loss-recover}.
\end{proof}

\section{Completely positive trace-preserving maps and quantum data
processing}

\label{sec:CP-DP}This section demonstrates how the inequality in
\eqref{eq:CMI-recovery}\ and the Alicki--Fannes--Winter inequality
\cite{AF04,Winter15}\ lead to a robust version of the main conclusion of
\cite{B14}, which links the data processing inequality to complete positivity
of open quantum systems dynamics.

There, the problem of open quantum systems evolution in the presence of
initial system-environment correlations was considered. In fact, if the system
and its surrounding environment are correlated already before the interaction
governing their joint evolution is turned on, then in general there does not
necessarily exist a linear (let alone positive or even completely positive)
map describing the reduced dynamics of the system~\cite{BP02}. Ref.~\cite{B14}
proposed to use the quantum data-processing inequality as a criterion to
establish whether the system's reduced dynamics are compatible with a CPTP
linear map or not.

The operational framework considered in \cite{B14} can be summarized as follows:

\begin{enumerate}
\item It is assumed that possible joint system-environment states belong to a
known family of states that constitutes the promise to the problem. It is also
assumed that such a family is ``steerable,'' namely, that there exists a
tripartite density operator $\rho_{RQE}$ such that the reference $R$ is able
to steer all possible bipartite system-environment states in the family. Such
a condition encompasses essentially all cases considered in the literature. We
therefore assume that, at some initial time $t=\tau$, the system-environment
correlations can be described by means of \textit{one} given tripartite state
$\rho_{RQE}$.

\item Moving to the next instant in time, $t=\tau+\Delta$, the
system-environment pair has evolved according to some isometry $V:QE\to
Q^{\prime}E^{\prime}$, while the reference $R$ remains unchanged. The
tripartite configuration $\rho_{RQE}$ has correspondingly evolved to the
tripartite configuration $\sigma_{RQ^{\prime}E^{\prime}}=(I_{R}\otimes
V_{QE})\rho_{RQE}(I_{R}\otimes V_{QE}^{\dag})$.

\item Only at this point we focus on the reduced reference-system dynamics
(i.e., the transformation mapping $\rho_{RQ}$ to $\sigma_{RQ^{\prime}}$),
checking whether these are compatible with the application of a CPTP linear
map on the system $Q$ \emph{alone}. More explicitly, we check whether there
exists a CPTP linear map $\mathcal{E}:Q\to Q^{\prime}$ such that
$\sigma_{RQ^{\prime}}=(\operatorname{id}_{R}\otimes\mathcal{E}_{Q})(\rho
_{RQ})$.
\end{enumerate}

The following theorem generalizes to the approximate scenario the insight
provided in Ref.~\cite{B14}.

\begin{theorem}
Fix a tripartite configuration $\rho_{RQE}$. Suppose that the data processing
inequality holds approximately for all joint system-environment evolutions
$V_{QE\rightarrow Q^{\prime}E^{\prime}}$, i.e.,%
\begin{equation}
I(R;Q^{\prime})_{\sigma}\leq I(R;Q)_{\rho}+\varepsilon, \label{eq:approx-DP}%
\end{equation}
where $\varepsilon>0$ and%
\begin{equation}
\sigma_{RQ^{\prime}E^{\prime}}=V_{QE\rightarrow Q^{\prime}E^{\prime}}%
\rho_{RQE}V_{QE\rightarrow Q^{\prime}E^{\prime}}^{\dag}.
\end{equation}
Then the conditional mutual information is nearly equal to zero:%
\begin{equation}
I(R;E|Q)_{\rho}\leq\varepsilon, \label{eq:small-CMI}%
\end{equation}
and the reduced dynamics are approximately CPTP, i.e., to every unitary
interaction $V_{QE\rightarrow Q^{\prime}E^{\prime}}$, there exists a CPTP\ map
$\mathcal{E}_{Q\rightarrow Q^{\prime}}$ such that%
\begin{equation}
-\log F( \sigma_{RQ^{\prime}},\mathcal{E}_{Q\rightarrow Q^{\prime}}( \rho
_{RQ}) ) \leq\varepsilon. \label{eq:approx-CPTP}%
\end{equation}

\end{theorem}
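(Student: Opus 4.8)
The plan is to derive the two conclusions \eqref{eq:small-CMI} and \eqref{eq:approx-CPTP} separately: the first from a judicious choice of the evolution $V$ combined with the chain rule for mutual information, and the second from the conditional-mutual-information recoverability bound \eqref{eq:CMI-recovery} together with monotonicity of fidelity under quantum channels. The structural observation tying the two together is that a small value of $I(R;E|Q)_\rho$ makes $\rho_{RQE}$ an approximate quantum Markov chain $R-Q-E$, so that $E$ can be approximately reconstructed from $Q$ \emph{alone} by a recovery channel that never touches the system on which we ultimately want a reduced description.

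First I would prove \eqref{eq:small-CMI}. Since the hypothesis \eqref{eq:approx-DP} is assumed to hold for \emph{every} isometric interaction $V_{QE\to Q^{\prime}E^{\prime}}$, I am free to apply it to the specific isometry that loads the entire joint register $QE$ into the output $Q^{\prime}$ while placing a fixed pure state in $E^{\prime}$, i.e., $Q^{\prime}\cong QE$ and $V|\psi\rangle_{QE}=|\psi\rangle_{Q^{\prime}}\otimes|0\rangle_{E^{\prime}}$. For this choice $\sigma_{RQ^{\prime}}$ coincides with $\rho_{RQE}$ up to relabeling, so $I(R;Q^{\prime})_{\sigma}=I(R;QE)_{\rho}$. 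Inserting the chain rule $I(R;QE)_{\rho}=I(R;Q)_{\rho}+I(R;E|Q)_{\rho}$ into \eqref{eq:approx-DP} then collapses to $I(R;E|Q)_{\rho}\leq\varepsilon$, which is exactly \eqref{eq:small-CMI}.

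Next I would establish \eqref{eq:approx-CPTP} for the given interaction $V_{QE\to Q^{\prime}E^{\prime}}$. Applying \eqref{eq:CMI-recovery} with the identifications $A=E$, $B=R$, $C=Q$ produces a CPTP recovery channel $\mathcal{R}_{Q\to QE}$, acting on $Q$ alone, that reconstructs $E$ and satisfies $-\log F(\rho_{RQE},\mathcal{R}_{Q\to QE}(\rho_{RQ}))\leq I(R;E|Q)_{\rho}\leq\varepsilon$, hence $F(\rho_{RQE},\mathcal{R}_{Q\to QE}(\rho_{RQ}))\geq 2^{-\varepsilon}$. I then define the candidate reduced channel $\mathcal{E}_{Q\to Q^{\prime}}\equiv\operatorname{Tr}_{E^{\prime}}\circ\,\mathcal{V}_{QE\to Q^{\prime}E^{\prime}}\circ\mathcal{R}_{Q\to QE}$, where $\mathcal{V}$ denotes conjugation by $V$; being a composition of channels it is CPTP and acts on $Q$ only. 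Because this composed post-processing is itself a channel on the $QE$ registers (with $R$ left untouched), monotonicity of fidelity gives $F(\sigma_{RQ^{\prime}},\mathcal{E}_{Q\to Q^{\prime}}(\rho_{RQ}))\geq F(\rho_{RQE},\mathcal{R}_{Q\to QE}(\rho_{RQ}))\geq 2^{-\varepsilon}$, and taking $-\log$ yields \eqref{eq:approx-CPTP}.

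The main obstacle is conceptual rather than computational: one must verify that the recovery channel handed over by \eqref{eq:CMI-recovery} genuinely acts on $Q$ alone, so that composing it with the physical interaction $V$ and then discarding $E^{\prime}$ produces an \emph{admissible} reduced-dynamics map on $Q$ whose output is compared against $\sigma_{RQ^{\prime}}$; the fidelity can then only improve under this post-processing. Equal care is needed to argue that the particular $V$ used in the first part qualifies as an allowed system--environment evolution (an isometry $QE\to Q^{\prime}E^{\prime}$), after which the chain-rule step is immediate. I would finally remark that the Alicki--Fannes--Winter inequality is not needed for this (forward) direction; it is reserved for the converse implication behind the stated ``if and only if.''
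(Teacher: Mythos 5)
Your proposal is correct and follows essentially the same route as the paper's proof: the special choice $Q'=QE$, $E'$ trivial, plus the chain rule yields \eqref{eq:small-CMI}, and the recovery map from \eqref{eq:CMI-recovery} composed with $V$ and $\operatorname{Tr}_{E'}$ yields \eqref{eq:approx-CPTP}. The only cosmetic difference is that you invoke monotonicity of fidelity once for the composed channel $\operatorname{Tr}_{E'}\circ\mathcal{V}$, whereas the paper splits this into unitary invariance followed by monotonicity under partial trace.
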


\begin{proof}
We begin by proving (\ref{eq:small-CMI}) with the same approach used in
\cite{B14}. Consider the particular evolution in which $Q^{\prime}=QE$ and
system $E^{\prime}$ is trivial. The assumption that data processing holds
approximately gives that%
\begin{equation}
I(R;Q)_{\rho}+\varepsilon\geq I(R;Q^{\prime})_{\sigma}=I(R;QE)_{\rho}.
\end{equation}
We can rewrite this inequality using the chain rule for conditional mutual
information as%
\begin{equation}
\varepsilon\geq I(R;QE)_{\rho}-I(R;Q)_{\rho}=I(R;E|Q)_{\rho},
\label{eq:small-CMI-2}%
\end{equation}
which proves (\ref{eq:small-CMI}). Now, from the inequality in
\eqref{eq:CMI-recovery}, we know that there exists a recovery map
$\mathcal{R}_{Q\rightarrow QE}$\ such that%
\begin{equation}
I(R;E|Q)_{\rho}\geq-\log F\left(  \rho_{RQE},\mathcal{R}_{Q\rightarrow
QE}\left(  \rho_{RQ}\right)  \right)
\end{equation}
Since the fidelity is invariant with respect to unitaries, we find
(abbreviating $V_{QE\rightarrow Q^{\prime}E^{\prime}}$ as $V$) that
\begin{align}
&  F\left(  \rho_{RQE},\mathcal{R}_{Q\rightarrow QE}\left(  \rho_{RQ}\right)
\right) \nonumber\\
&  =F\left(  V\rho_{RQE}V^{\dag},V\mathcal{R}_{Q\rightarrow QE}\left(
\rho_{RQ}\right)  V^{\dag}\right) \\
&  =F\left(  \sigma_{RQ^{\prime}E^{\prime}},V\mathcal{R}_{Q\rightarrow
QE}\left(  \rho_{RQ}\right)  V^{\dag}\right) \\
&  \leq F\left(  \sigma_{RQ^{\prime}},\text{Tr}_{E^{\prime}}\left\{
V\mathcal{R}_{Q\rightarrow QE}\left(  \rho_{RQ}\right)  V^{\dag}\right\}
\right)  ,
\end{align}
where the inequality follows from monotonicity of fidelity under the
discarding of subsystems. By defining the channel%
\begin{equation}
\mathcal{E}_{Q\rightarrow Q^{\prime}}\left(  \cdot\right)  \equiv
\text{Tr}_{E^{\prime}}\left\{  V_{QE\rightarrow Q^{\prime}E^{\prime}%
}\mathcal{R}_{Q\rightarrow QE}\left(  \cdot\right)  V_{QE\rightarrow
Q^{\prime}E^{\prime}}^{\dag}\right\}  ,
\end{equation}
we find that%
\begin{equation}
\varepsilon\geq-\log F\left(  \sigma_{RQ^{\prime}},\mathcal{E}_{Q\rightarrow
Q^{\prime}}\left(  \rho_{RQ}\right)  \right)  ,
\end{equation}
establishing (\ref{eq:approx-CPTP}).
\end{proof}

The following theorem provides a converse.

\begin{theorem}
Suppose that the reduced dynamics are approximately CPTP, i.e., that to every
unitary interaction $V_{QE\rightarrow Q^{\prime}E^{\prime}}$ leading to%
\begin{equation}
\sigma_{RQ^{\prime}}=\operatorname{Tr}_{E^{\prime}}\left\{  V_{QE\rightarrow
Q^{\prime}E^{\prime}}\rho_{RQE}V_{QE\rightarrow Q^{\prime}E^{\prime}}^{\dag
}\right\}  ,
\end{equation}
there exists a CPTP\ map $\mathcal{E}_{Q\rightarrow Q^{\prime}}$\ such that%
\begin{equation}
\frac{1}{2}\left\Vert \sigma_{RQ^{\prime}}-\mathcal{E}_{Q\rightarrow
Q^{\prime}}\left(  \rho_{RQ}\right)  \right\Vert _{1}\leq\varepsilon,
\label{eq:approx-CPTP-1}%
\end{equation}
where $\varepsilon\in\left[  0,1\right]  $. Then the quantum data processing
inequality is satisfied approximately, in the sense that
\begin{multline}
I(R;Q^{\prime})_{\sigma}\leq I(R;Q)_{\rho}\\
+2\varepsilon\log\left\vert R\right\vert +\left(  1+\varepsilon\right)
h_{2}(\varepsilon/\left(  1+\varepsilon\right)  ),
\end{multline}
and the conditional mutual information is nearly equal to zero as well:%
\begin{equation}
I(R;E|Q)_{\rho}\leq2\varepsilon\log\left\vert R\right\vert +\left(
1+\varepsilon\right)  h_{2}(\varepsilon/\left(  1+\varepsilon\right)  ).
\end{equation}

\end{theorem}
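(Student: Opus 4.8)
The plan is to prove the approximate data-processing inequality by comparing the true reduced output state $\sigma_{RQ'}$ against the output $\widetilde{\sigma}_{RQ'}\equiv\mathcal{E}_{Q\rightarrow Q'}(\rho_{RQ})$ of the genuine CPTP map furnished by the hypothesis, and then invoking a continuity estimate. The starting observation is that $\sigma_{RQ'}$, $\widetilde{\sigma}_{RQ'}$, and $\rho_{RQE}$ all share the same marginal on $R$, namely $\sigma_R=\widetilde{\sigma}_R=\rho_R$, since $V_{QE\rightarrow Q'E'}$ acts as an isometry on $QE$ alone (so $\operatorname{Tr}_{Q'E'}\{V\rho_{RQE}V^\dag\}=\rho_R$) and $\mathcal{E}_{Q\rightarrow Q'}$ is trace-preserving and acts on $Q$ alone. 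Because $\mathcal{E}_{Q\rightarrow Q'}$ is an honest CPTP map supported on $Q$, the ordinary monotonicity of mutual information under local channels gives $I(R;Q')_{\widetilde{\sigma}}\leq I(R;Q)_\rho$.

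Next I would control the gap between $I(R;Q')_\sigma$ and $I(R;Q')_{\widetilde{\sigma}}$ using the hypothesis $\tfrac12\|\sigma_{RQ'}-\widetilde{\sigma}_{RQ'}\|_1\leq\varepsilon$ together with the Alicki--Fannes--Winter inequality \cite{AF04,Winter15}. The essential maneuver is to write the mutual information in the form $I(R;Q')=H(R)-H(R|Q')$; since the two states have identical $R$-marginals, the $H(R)$ terms cancel exactly and only the conditional entropy $H(R|Q')$ requires a continuity bound. The Alicki--Fannes--Winter estimate applied to $H(R|Q')$ then yields $|H(R|Q')_\sigma-H(R|Q')_{\widetilde{\sigma}}|\leq 2\varepsilon\log|R|+(1+\varepsilon)h_2(\varepsilon/(1+\varepsilon))$. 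Combining this with $I(R;Q')_{\widetilde{\sigma}}\leq I(R;Q)_\rho$ produces precisely the claimed approximate data-processing inequality.

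To obtain the bound on the conditional mutual information, I would reuse the specialization employed in the preceding theorem: take the evolution with $Q'=QE$ and $E'$ trivial, so that $V$ is the trivial embedding and $\sigma_{RQ'}=\rho_{RQE}$, whence $I(R;Q')_\sigma=I(R;QE)_\rho$. The approximate data-processing inequality then reads $I(R;QE)_\rho\leq I(R;Q)_\rho+2\varepsilon\log|R|+(1+\varepsilon)h_2(\varepsilon/(1+\varepsilon))$, and the chain rule $I(R;QE)_\rho-I(R;Q)_\rho=I(R;E|Q)_\rho$ immediately delivers the stated bound on $I(R;E|Q)_\rho$.

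The step requiring the most care is securing the dimension factor $\log|R|$ rather than $\log|Q'|$ in the continuity estimate. This hinges entirely on the two $R$-marginals being exactly equal, which lets the $H(R)$ terms drop out so that only the $R$-conditional entropy must be bounded; had I instead written $I(R;Q')=H(Q')-H(Q'|R)$, the Alicki--Fannes--Winter inequality would have produced the weaker factor $\log|Q'|$. I would also verify that the hypothesis is invoked with the trace-distance bound in the correct direction and that $\varepsilon\in[0,1]$ is exactly the regime in which the Alicki--Fannes--Winter inequality in the stated form applies.
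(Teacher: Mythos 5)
Your proposal is correct and follows essentially the same route as the paper's proof: write $I(R;Q')=H(R)-H(R|Q')$, use the equality of the $R$-marginals so that only the conditional entropy needs the Alicki--Fannes--Winter continuity bound (yielding the $\log|R|$ factor), apply data processing for the genuine CPTP map $\mathcal{E}_{Q\rightarrow Q'}$, and obtain the conditional-mutual-information bound via the $Q'=QE$ specialization and the chain rule. No gaps.
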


\begin{proof}
This follows directly from the assumption in (\ref{eq:approx-CPTP-1}), the
Alicki--Fannes--Winter inequality, and the quantum data processing inequality:%
\begin{align}
&  I(R;Q^{\prime})_{\sigma}\nonumber\\
&  =H(R)_{\sigma}-H(R|Q^{\prime})_{\sigma}\\
&  =H(R)_{\mathcal{E}\left(  \rho\right)  }-H(R|Q^{\prime})_{\sigma}\\
&  \leq H(R)_{\mathcal{E}\left(  \rho\right)  }-H(R|Q^{\prime})_{\mathcal{E}%
\left(  \rho\right)  }\nonumber\\
&  \qquad+2\varepsilon\log\left\vert R\right\vert +\left(  1+\varepsilon
\right)  h_{2}(\varepsilon/\left(  1+\varepsilon\right)  )\\
&  =I(R;Q^{\prime})_{\mathcal{E}\left(  \rho\right)  }\nonumber\\
&  \qquad+2\varepsilon\log\left\vert R\right\vert +\left(  1+\varepsilon
\right)  h_{2}(\varepsilon/\left(  1+\varepsilon\right)  )\\
&  \leq I(R;Q)_{\rho}+2\varepsilon\log\left\vert R\right\vert +\left(
1+\varepsilon\right)  h_{2}(\varepsilon/\left(  1+\varepsilon\right)  ).
\end{align}
The inequality for conditional mutual information follows the same reasoning
we used to arrive at (\ref{eq:small-CMI-2}).
\end{proof}

\section{Conclusion}

\label{sec:conclusion}We have shown how recent results regarding
recoverability give enhancements to several entropy inequalities, having to do
with entropy gain, information gain, disturbance, and complete positivity of
open quantum systems dynamics. Our first result is a remainder term for the
entropy gain of a quantum channel, which for unital channels is stronger than
that which is obtained by directly applying the results of
\cite{Junge15,Sutter15}. This result implies that a small increase in entropy
under a subunital channel is a witness to the fact that the channel's adjoint
can be used as a recovery channel to undo the action of the original channel.
Our second result regards the information gain of a quantum measurement, both
without and with quantum side information. We find here that a small
information gain implies that it is possible to undo the action of the
original measurement (if it is efficient). The result also has operational
ramifications for the information-theoretic tasks known as measurement
compression without and with quantum side information. Our third result
provides an information-theoretic measure of disturbance, introduced
in~\cite{BH08}, a strong operational meaning. We finally provide a robust
extension of the main result of \cite{B14}, establishing that the reduced
dynamics of a system-environment interaction are approximately CPTP\ if and
only if the data processing inequality holds approximately.

\bigskip

\begin{acknowledgments}
We acknowledge helpful discussions with Mario Berta, Ryan Glasser, Eneet Kaur,
Prabha Mandayam, David Reeb, Maksim E.~Shirokov, David Sutter, Marco
Tomamichel, Dave Touchette, Andreas Winter, and Lin Zhang. FB acknowledges
support from the JSPS KAKENHI, No.~26247016. SD\ and MMW acknowledge the NSF
under Award No.~CCF-1350397 and startup funds from the Department of Physics
and Astronomy and the Center for Computation and Technology at Louisiana State University.
\end{acknowledgments}

\bibliographystyle{plain}
\bibliography{Ref}

\end{document}